\documentclass[11pt,a4paper]{article}

\usepackage[utf8]{inputenc}
\usepackage[T1]{fontenc}
\usepackage{lmodern}
\usepackage{amsmath,amssymb,amsfonts,amsthm}
\usepackage{graphicx}
\usepackage{booktabs}
\usepackage{longtable}
\usepackage{microtype}
\usepackage{hyperref}
\usepackage[numbers,sort&compress]{natbib}
\usepackage{geometry}
\usepackage{xcolor}
\usepackage{caption}
\usepackage{subcaption}

\hypersetup{
  colorlinks=true,
  linkcolor=blue!70!black,
  citecolor=blue!70!black,
  urlcolor=blue!70!black,
  pdfauthor={Pavan B. Govindaraju},
  pdftitle={On the Non-Negativity of Longitudinal Velocity Autocorrelations in Homogeneous Isotropic Turbulence: Kinematic Constraints, Convexity, and Spectral Broadening}
}

\newtheorem{theorem}{Theorem}

\newtheorem{proposition}[theorem]{Proposition}
\newtheorem{corollary}[theorem]{Corollary}
\theoremstyle{definition}

\title{\textbf{\Large On the Non-Negativity of Longitudinal Velocity Autocorrelations in Homogeneous Isotropic Turbulence: Kinematic Constraints, Convexity, and Spectral Broadening}}

\author{
  \textbf{Pavan B. Govindaraju}\thanks{Email: \href{mailto:pavan@splitfxm.com}{pavan@splitfxm.com}} \\
  \textit{SplitFXM, Hyderabad, India}
}

\date{\today}

\begin{document}

\maketitle

\begin{abstract}
A fundamental question in the statistical theory of homogeneous isotropic turbulence is whether the longitudinal velocity autocorrelation $f(r)$ remains non-negative and if it can be shown from first principles such as kinematic realizability. While incompressibility forces the transverse correlation $g(r)$ to exhibit a negative loop ($\int_0^\infty r g(r)\,dr = 0$), the sign behavior of $f(r)$ is subtle. We show that positive-definiteness via Bochner's and Schoenberg's theorems structurally fails to enforce pointwise positivity because the 3D spherical projection kernels are sign-changing. By mapping the problem to the one-dimensional spectrum $E_{11}(k_1)$ via the cosine transform and applying P\'olya's criterion, convexity of $E_{11}(k_1)$ provides a rigorous sufficient condition for $f(r) \ge 0$. While pure Kolmogorov $-5/3$ scaling is strictly convex, physically realizable spectra with Batchelor ($k^4$) or Saffman ($k^2$) infrared scaling are necessarily concave near $k_1 = 0$. For mature broadband spectra such as the von K\'arm\'an--Pao model, high-precision quadrature confirms that the convex inertial bulk dominates, yielding strictly positive tails and demonstrating that apparent negative dips are discrete Fourier artifacts. Also, we construct a smooth, divergence-free, finite-energy narrowband initial field that produces a genuine negative loop ($\min_r f(r) \approx -0.083$), serving as a counterexample to universal non-negativity. Spectral bandwidth is shown as the governing physical parameter, with nonlinear triad interactions rapidly broadening narrowband fields over an eddy turnover time, systematically suppressing negative excursions and preserving $f(r) \ge 0$ in mature turbulence.
\end{abstract}

\vspace{0.5em}
\noindent\textbf{Keywords:} Homogeneous isotropic turbulence, velocity autocorrelations, P\'olya's criterion, Bochner's theorem, spectral broadening, grid turbulence.

\section{Introduction}
\label{sec:intro}
The statistical description of homogeneous isotropic turbulence (HIT), initiated by Taylor \cite{Taylor1935} and formalized by von K\'arm\'an \& Howarth \cite{vonKarmanHowarth1938} and Robertson \cite{Robertson1940}, rests upon the two-point velocity correlation tensor
\begin{equation}
R_{ij}(\mathbf{r}, t) = \langle u_i(\mathbf{x}, t) u_j(\mathbf{x} + \mathbf{r}, t) \rangle,
\label{eq:Rij_def}
\end{equation}
where $\mathbf{u}(\mathbf{x}, t)$ is the divergence-free fluctuating velocity field and angle brackets denote an ensemble average. Under spatial isotropy and reflectional invariance, the nine components of $R_{ij}(\mathbf{r}, t)$ collapse entirely onto two scalar correlation coefficients: the longitudinal autocorrelation function $f(r, t)$ and the lateral (transverse) autocorrelation function $g(r, t)$, defined by
\begin{equation}
R_{ij}(\mathbf{r}, t) = u'^2 \left[ g(r, t)\,\delta_{ij} + \big(f(r, t) - g(r, t)\big)\,\frac{r_i r_j}{r^2} \right],
\label{eq:Rij_isotropic}
\end{equation}
with $r = |\mathbf{r}|$, $u'(t) = \langle u_1^2 \rangle^{1/2}$, and normalized such that $f(0, t) = g(0, t) = 1$.

Incompressible mass conservation ($\nabla \cdot \mathbf{u} = 0$) enforces the well-known differential kinematic relation
\begin{equation}
g(r, t) = f(r, t) + \frac{r}{2} \frac{\partial f(r, t)}{\partial r} = \frac{1}{2r} \frac{\partial}{\partial r}\big(r^2 f(r, t)\big).
\label{eq:continuity_fg}
\end{equation}
Integrating Eq.~\eqref{eq:continuity_fg} over all separation distances yields an exact integral constraint for the lateral correlation:
\begin{equation}
\int_0^\infty r\,g(r, t)\,dr = \frac{1}{2} \Big[ r^2 f(r, t) \Big]_0^\infty = 0,
\label{eq:g_zero_integral}
\end{equation}
provided that $r^2 f(r, t) \to 0$ as $r \to \infty$. Because $g(0, t) = 1 > 0$, Eq.~\eqref{eq:g_zero_integral} strictly forces $g(r, t)$ to cross zero and attain negative values over a finite interval of separations, a phenomenon ubiquitously referred to as the \emph{transverse negative loop} \cite{Batchelor1953, MoninYaglom1975, Pope2000, Davidson2004}.

In contrast, no such kinematic integral constraint exists for the longitudinal correlation $f(r, t)$. The longitudinal integral scale,
\begin{equation}
L(t) = \int_0^\infty f(r, t)\,dr,
\label{eq:integral_scale}
\end{equation}
is strictly positive for all physically realizable turbulent fields. In classical self-preservation theories, Sedov \cite{Sedov1944} demonstrated that under the hypothesis of self-similar decay ($f(r, t) = f(\xi)$ with similarity variable $\xi = r/\ell(t)$), the von K\'arm\'an--Howarth equation yields an exact closed-form solution in terms of the confluent hypergeometric function of the first kind \cite{Sedov1944, Ran2009}:
\begin{equation}
f(\xi) = {}_1F_1\left(a;\, b;\, -\frac{\xi^2}{8}\right), \qquad (b > a > 0).
\label{eq:sedov_hypergeometric}
\end{equation}
From Kummer's integral representation,
\begin{equation}
{}_1F_1(a;\, b;\, z) = \frac{\Gamma(b)}{\Gamma(a)\Gamma(b-a)} \int_0^1 e^{z t}\, t^{a-1} (1-t)^{b-a-1}\,dt,
\label{eq:kummer_integral}
\end{equation}
it is straightforward to prove that for any $z \le 0$ (here $z = -\xi^2/8 \le 0$) with $b > a > 0$, the integrand is strictly positive on $(0, 1)$, proving that Sedov's self-similar solutions unconditionally satisfy $f(\xi) > 0$ for all $\xi \in [0, \infty)$ (shown in Figure~\ref{fig:sedov_solution}).

\begin{figure}[t]
\centering
\includegraphics[width=0.68\textwidth]{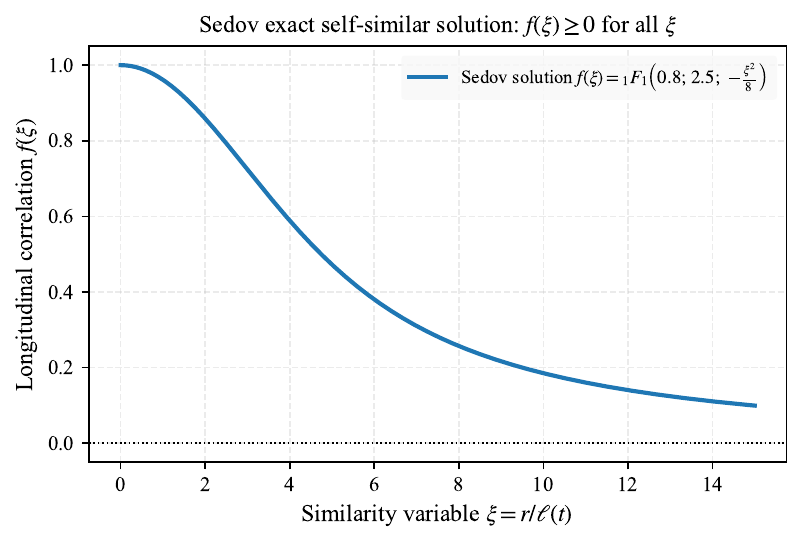}
\caption{The exact self-similar solution of the von K\'arm\'an--Howarth equation derived by Sedov \cite{Sedov1944}, $f(\xi) = {}_1F_1(0.8;\, 2.5;\, -\xi^2/8)$. Because Kummer's integral representation maintains a strictly positive integrand for negative arguments, self-preserving solutions are unconditionally non-negative ($f(\xi) \ge 0$).}
\label{fig:sedov_solution}
\end{figure}

However, physical and synthetic turbulent fields are not constrained \emph{a priori} to self-preserving decay states. In textbooks and classical closure theories, $f(r)$ is nevertheless frequently depicted as a strictly non-negative function \cite{Pope2000, Frisch1996}. This raises a foundational question:
\begin{quote}
\emph{Is the longitudinal velocity autocorrelation function $f(r)$ non-negative for all spatial separations $r$ from first principles of kinematic realizability, or can $f(r)$ become negative outside the self-similar class?}
\end{quote}

This question has persisted across both theoretical mechanics and experimental measurements. In experimental grid turbulence, while the lateral correlation $g(r)$ ubiquitously crosses zero to satisfy incompressibility, measured longitudinal correlations $f(r)$ in mature decay regions are consistently observed to remain non-negative within experimental uncertainty \cite{StewartTownsend1951, ComteBellotCorrsin1971, FrenkielKlebanoffHuang1979}. In their comprehensive space-time correlation measurements, Comte-Bellot \& Corrsin \cite{ComteBellotCorrsin1971} explicitly documented that full-band longitudinal correlations following the mean flow do not become negative. Furthermore, they demonstrated that apparent negative excursions in single-probe temporal autocorrelations are typically artifacts of electronic AC-coupling or high-pass filtering (which removes the zero-wavenumber energy and enforces a zero-integral constraint), whereas isolating narrow spectral bands electronically produces pronounced oscillatory negative loops. Yet, from a fundamental theoretical perspective, it has remained unsettled whether the non-negativity $f(r) \ge 0$ is a strict consequence of kinematic realizability and Navier-Stokes dynamics, or whether kinematically valid initial velocity fields can genuinely attain negative values.

In this work, we provide a theoretical and numerical basis to this question:
\begin{enumerate}
\item We show that positive definiteness in $\mathbb{R}^3$ via Bochner's and Schoenberg's theorems \cite{Bochner1959, Schoenberg1938} cannot establish $f(r) \ge 0$ because the associated projection kernels oscillate and change sign.
\item By analyzing the one-dimensional longitudinal energy spectrum $E_{11}(k_1)$, we prove that $E_{11}(k_1)$ is \emph{unconditionally monotonically decreasing} for every realizable 3D spectrum $E(k) \ge 0$, and we establish via P\'olya's criterion \cite{Polya1949} that convexity of $E_{11}(k_1)$ is a rigorous sufficient condition for $f(r) \ge 0$.
\item We prove that physically admissible turbulent spectra, encompassing both the Batchelor ($E(k) \sim k^4$) and Saffman ($E(k) \sim k^2$) infrared universality classes, are necessarily concave in a finite neighborhood of the origin ($k_1 \to 0$), so P\'olya's sufficient condition is never strictly satisfied in a global sense.
\item For mature broadband spectra (e.g., von K\'arm\'an--Pao), high-precision quadrature demonstrates that the convex inertial bulk dominates the cosine transform, ensuring $f(r) \ge 0$ and identifying previous negative dips as discrete Fourier transform (FFT) aliasing artifacts.
\item We construct an admissible, smooth, divergence-free, finite-energy narrowband spectrum satisfying Batchelor's infrared constraint that yields a pronounced, provable negative loop ($\min_r f(r) \approx -0.083$). This demonstrates that non-self-similar, narrowband velocity fields do not fall within Sedov's non-negative class, proving that $f(r) \ge 0$ is not a universal kinematic law. This state represents a legitimate Navier-Stokes initial condition whose evolution is guaranteed by classical local existence theorems \cite{Leray1934, FujitaKato1964}, revealing that spectral broadbandness generated by nonlinear triad interactions is the fundamental physical mechanism that maintains $f(r) \ge 0$ in mature turbulent flows.
\end{enumerate}

\section{Kinematics and the Failure of Naive Positive Definiteness}
\label{sec:bochner_failure}

In this section, we formulate the kinematics of homogeneous isotropic turbulence from first principles, derive the isotropic correlation and spectral tensors, and demonstrate why classical multidimensional positive-definiteness arguments (via Bochner's and Schoenberg's theorems) structurally fail to enforce non-negativity of the longitudinal correlation $f(r)$.

\subsection{Invariant Tensor Theory and Derivation of the Isotropic Form of \texorpdfstring{$R_{ij}(\mathbf{r})$}{Rij(r)}}
\label{subsec:rij_derivation}

We begin by establishing from first principles why the two-point velocity correlation tensor $R_{ij}(\mathbf{r}) = \langle u_i(\mathbf{x}) u_j(\mathbf{x}+\mathbf{r}) \rangle$ in homogeneous isotropic turbulence depends entirely on just two scalar correlation functions, $f(r)$ and $g(r)$.

\paragraph{1. Homogeneity (Translational Invariance):}
Statistical homogeneity implies that the correlation depends only on the separation vector $\mathbf{r} = \mathbf{x}' - \mathbf{x}$ and is independent of the absolute spatial location $\mathbf{x}$. Evaluating the function at $-\mathbf{r}$ yields
\begin{equation}
R_{ij}(-\mathbf{r}) = \langle u_i(\mathbf{x}) u_j(\mathbf{x}-\mathbf{r}) \rangle = \langle u_i(\mathbf{x}+\mathbf{r}) u_j(\mathbf{x}) \rangle = \langle u_j(\mathbf{x}) u_i(\mathbf{x}+\mathbf{r}) \rangle = R_{ji}(\mathbf{r}).
\label{eq:Rij_homogeneity_sym}
\end{equation}

\paragraph{2. Isotropy (Rotational Invariance):}
Statistical isotropy requires that under any coordinate rotation $Q \in \mathrm{SO}(3)$ (where $Q_{ik} Q_{jk} = \delta_{ij}$ and $\det Q = +1$), the correlation tensor transforms as an isotropic second-order tensor field:
\begin{equation}
R_{ij}(Q\mathbf{r}) = Q_{im} Q_{jn} R_{mn}(\mathbf{r}).
\label{eq:Rij_rotation}
\end{equation}
By the invariant tensor representation theorem for vector-valued isotropic functions (\cite{Robertson1940}, \cite{Batchelor1953}), any second-order tensor constructed from a single vector argument $\mathbf{r}$ and the fundamental isotropic tensors ($\delta_{ij}$ and the Levi-Civita permutation symbol $\epsilon_{ijk}$) must be a linear combination of the three fundamental isotropic second-order forms:
\begin{equation}
R_{ij}(\mathbf{r}) = A(r)\,\delta_{ij} + B(r)\,r_i r_j + C(r)\,\epsilon_{ijk} r_k,
\label{eq:Rij_general_isotropic}
\end{equation}
where $r = |\mathbf{r}| = \sqrt{r_k r_k}$ is the sole scalar invariant, and $A(r), B(r), C(r)$ are scalar functions of $r$.

\paragraph{3. Reflection Invariance (Parity / Zero Mean Helicity):}
Under a spatial reflection through the origin ($\mathbf{x} \to -\mathbf{x}$), a true (polar) velocity vector transforms as $\mathbf{u}(\mathbf{x}) \to -\mathbf{u}(-\mathbf{x})$. Statistical reflection invariance (parity invariance) requires that the two-point velocity statistics remain invariant under this transformation:
\begin{align}
R_{ij}(\mathbf{r}) &= \langle u_i(\mathbf{x}) u_j(\mathbf{x}+\mathbf{r}) \rangle = \langle [-u_i(-\mathbf{x})] [-u_j(-\mathbf{x}-\mathbf{r})] \rangle \nonumber \\
&= \langle u_i(-\mathbf{x}) u_j(-\mathbf{x}-\mathbf{r}) \rangle = R_{ij}(-\mathbf{r}).
\label{eq:Rij_reflection}
\end{align}
Evaluating Eq.~\eqref{eq:Rij_general_isotropic} at $-\mathbf{r}$ yields:
\begin{align}
R_{ij}(-\mathbf{r}) &= A(r)\,\delta_{ij} + B(r)\,(-r_i)(-r_j) + C(r)\,\epsilon_{ijk}(-r_k) \nonumber \\
&= A(r)\,\delta_{ij} + B(r)\,r_i r_j - C(r)\,\epsilon_{ijk} r_k.
\end{align}
Equating $R_{ij}(\mathbf{r}) = R_{ij}(-\mathbf{r})$ from Eq.~\eqref{eq:Rij_reflection} gives:
\begin{equation}
R_{ij}(\mathbf{r}) - R_{ij}(-\mathbf{r}) = 2C(r)\,\epsilon_{ijk} r_k = 0 \implies C(r) \equiv 0.
\end{equation}
Equivalently, combining reflection invariance ($R_{ij}(\mathbf{r}) = R_{ij}(-\mathbf{r})$) with homogeneity ($R_{ij}(-\mathbf{r}) = R_{ji}(\mathbf{r})$ from Eq.~\eqref{eq:Rij_homogeneity_sym}) requires index symmetry $R_{ij}(\mathbf{r}) = R_{ji}(\mathbf{r})$. Because the Levi-Civita term is strictly antisymmetric under index transposition ($\epsilon_{jik} r_k = -\epsilon_{ijk} r_k$), its presence violates reflection symmetry unless $C(r) \equiv 0$.

Physically, $C(r)$ characterizes turbulent helicity ($\langle \mathbf{u} \cdot (\nabla \times \mathbf{u}) \rangle$), a pseudoscalar that identically vanishes in mirror-symmetric isotropic turbulence. Hence, $R_{ij}(\mathbf{r})$ is strictly symmetric:
\begin{equation}
R_{ij}(\mathbf{r}) = A(r)\,\delta_{ij} + B(r)\,r_i r_j.
\label{eq:Rij_symmetric}
\end{equation}

\paragraph{4. Identification with Longitudinal and Lateral Correlations:}
To determine $A(r)$ and $B(r)$ physically, align the coordinate system such that the separation vector lies along the $x_1$-axis: $\mathbf{r} = (r, 0, 0)$.
\begin{itemize}
\item The \emph{longitudinal autocorrelation} is the correlation of the velocity component parallel to the separation vector:
\begin{equation}
R_{11}(r \mathbf{e}_1) = \langle u_1(\mathbf{x}) u_1(\mathbf{x} + r\mathbf{e}_1) \rangle \equiv u'^2 f(r).
\label{eq:f_def_parallel}
\end{equation}
Evaluating Eq.~\eqref{eq:Rij_symmetric} for $i=j=1$ gives
\begin{equation}
R_{11}(r \mathbf{e}_1) = A(r) + B(r)\,r^2 = u'^2 f(r).
\label{eq:match_longitudinal}
\end{equation}
\item The \emph{lateral (transverse) autocorrelation} is the correlation of a velocity component perpendicular to the separation vector (e.g., $u_2$ or $u_3$):
\begin{equation}
R_{22}(r \mathbf{e}_1) = \langle u_2(\mathbf{x}) u_2(\mathbf{x} + r\mathbf{e}_1) \rangle \equiv u'^2 g(r).
\label{eq:g_def_transverse}
\end{equation}
Evaluating Eq.~\eqref{eq:Rij_symmetric} for $i=j=2$ gives
\begin{equation}
R_{22}(r \mathbf{e}_1) = A(r) = u'^2 g(r).
\label{eq:match_lateral}
\end{equation}
\end{itemize}
Substituting $A(r) = u'^2 g(r)$ into Eq.~\eqref{eq:match_longitudinal} yields
\begin{equation}
B(r)\,r^2 = u'^2 \big(f(r) - g(r)\big) \implies B(r) = u'^2\,\frac{f(r) - g(r)}{r^2}.
\label{eq:solve_B}
\end{equation}
Substituting $A(r)$ and $B(r)$ back into Eq.~\eqref{eq:Rij_symmetric} gives the canonical tensorial form of von K\'arm\'an \& Howarth \cite{vonKarmanHowarth1938}:
\begin{equation}
R_{ij}(\mathbf{r}) = u'^2 \left[ g(r)\,\delta_{ij} + \big(f(r) - g(r)\big)\,\frac{r_i r_j}{r^2} \right].
\label{eq:Rij_canonical_proven}
\end{equation}
Thus, any homogeneous, isotropic, reflection-symmetric second-order velocity correlation tensor in three dimensions is uniquely and completely determined by the two scalar functions $f(r)$ and $g(r)$.

\paragraph{5. Incompressible Continuity and Derivation of the Transverse Zero-Crossing:}
We now prove from first principles why the lateral correlation $g(r)$ is kinematically compelled to cross zero and attain negative values.

Mass conservation for an incompressible fluid requires the velocity field to be divergence-free, $\nabla \cdot \mathbf{u} = \frac{\partial u_j}{\partial x_j} = 0$. Taking the divergence of the correlation tensor with respect to the separation coordinates $r_j$ yields
\begin{equation}
\frac{\partial R_{ij}(\mathbf{r})}{\partial r_j} = \left\langle u_i(\mathbf{x}) \frac{\partial u_j(\mathbf{x}+\mathbf{r})}{\partial r_j} \right\rangle = 0.
\label{eq:div_Rij_zero}
\end{equation}
Differentiating Eq.~\eqref{eq:Rij_canonical_proven} directly using $\frac{\partial r}{\partial r_j} = \frac{r_j}{r}$ and $\frac{\partial}{\partial r_j}(r_i r_j) = \delta_{ij}r_j + r_i\delta_{jj} = 4r_i$:
\begin{align}
\frac{1}{u'^2}\frac{\partial R_{ij}(\mathbf{r})}{\partial r_j} &= g'(r)\frac{r_j}{r}\delta_{ij} + \frac{d}{dr}\left(\frac{f(r)-g(r)}{r^2}\right)\frac{r_j}{r} r_i r_j + \frac{f(r)-g(r)}{r^2}\,(4r_i) \nonumber\\
&= g'(r)\frac{r_i}{r} + \left(\frac{f'(r)-g'(r)}{r^2} - \frac{2(f(r)-g(r))}{r^3}\right)\frac{r^2 r_i}{r} + \frac{4(f(r)-g(r)) r_i}{r^2} \nonumber\\
&= \frac{r_i}{r} \left[ f'(r) + \frac{2\big(f(r)-g(r)\big)}{r} \right] = 0.
\label{eq:div_derivation_steps}
\end{align}
Since Eq.~\eqref{eq:div_derivation_steps} must vanish for any arbitrary separation vector $r_i \neq 0$, the bracketed scalar term must vanish identically:
\begin{equation}
f'(r) + \frac{2\big(f(r)-g(r)\big)}{r} = 0 \implies g(r) = f(r) + \frac{r}{2}\frac{df(r)}{dr}.
\label{eq:continuity_fg_derived}
\end{equation}
Multiplying Eq.~\eqref{eq:continuity_fg_derived} by $2r$ reveals that the right-hand side is an exact total derivative:
\begin{equation}
2r\,g(r) = 2r f(r) + r^2 f'(r) = \frac{d}{dr}\big(r^2 f(r)\big).
\label{eq:total_derivative_rg}
\end{equation}
Integrating Eq.~\eqref{eq:total_derivative_rg} over all spatial separations $r \in [0, \infty)$ yields
\begin{equation}
\int_0^\infty r\,g(r)\,dr = \frac{1}{2} \int_0^\infty \frac{d}{dr}\big(r^2 f(r)\big)\,dr = \frac{1}{2}\Big[ r^2 f(r) \Big]_0^\infty.
\label{eq:integral_rg_derivation}
\end{equation}
For any physical turbulent field with finite energy and finite integral scale, the correlation decays asymptotically such that $r^2 f(r) \to 0$ as $r \to \infty$. At the lower limit, $\lim_{r\to 0} r^2 f(r) = 0 \cdot 1 = 0$. Consequently, the boundary terms vanish identically, establishing the exact integral constraint:
\begin{equation}
\int_0^\infty r\,g(r)\,dr = 0.
\label{eq:rg_zero_constraint_proven}
\end{equation}
Because the normalization requires $g(0) = 1 > 0$ and $g(r)$ is continuous, $g(r) > 0$ in an initial neighborhood $[0, r_0)$. On this interval, the integrand $r g(r)$ is strictly positive, yielding a strictly positive area $\int_0^{r_0} r g(r)\,dr > 0$. For the entire integral on $[0, \infty)$ to balance to identically zero, the integrand $r g(r)$ (and hence $g(r)$, since $r > 0$) \textbf{must strictly cross zero and attain negative values} over a finite range of separations ($r \in (r_1, r_2)$). This completes the kinematic proof that the transverse correlation $g(r)$ must exhibit a negative loop.

\subsection{Spectral Tensors and Kinematic Realizability}
\label{subsec:spectral_tensors}

In statistical fluid mechanics, the energy distribution across spatial scales is analyzed through the Fourier representation of the fluctuating velocity field $\mathbf{u}(\mathbf{x})$. The two-point velocity correlation tensor $R_{ij}(\mathbf{r}) = \langle u_i(\mathbf{x}) u_j(\mathbf{x}+\mathbf{r}) \rangle$ and the velocity spectrum tensor $\Phi_{ij}(\mathbf{k})$ form a three-dimensional Fourier transform pair:
\begin{equation}
R_{ij}(\mathbf{r}) = \iiint_{\mathbb{R}^3} \Phi_{ij}(\mathbf{k})\,e^{i \mathbf{k} \cdot \mathbf{r}}\,d\mathbf{k}, \qquad
\Phi_{ij}(\mathbf{k}) = \frac{1}{(2\pi)^3} \iiint_{\mathbb{R}^3} R_{ij}(\mathbf{r})\,e^{-i \mathbf{k} \cdot \mathbf{r}}\,d\mathbf{r},
\label{eq:spectral_tensor_pair}
\end{equation}
where $\mathbf{k} = (k_1, k_2, k_3)$ is the wavevector and $k = |\mathbf{k}| = (k_i k_i)^{1/2}$ is the wavenumber magnitude.

\subsubsection{Derivation of the Isotropic Spectrum Tensor from First Principles}
We now derive the unique algebraic structure of $\Phi_{ij}(\mathbf{k})$ from first principles under the physical symmetries of homogeneous, isotropic, reflection-symmetric, incompressible turbulence:

\paragraph{1. General Isotropic Form in Wavevector Space:}
Under a coordinate rotation $Q \in \mathrm{SO}(3)$ (with $Q_{ik} Q_{jk} = \delta_{ij}$ and $\det Q = +1$), statistical isotropy requires the spectrum tensor to transform as an isotropic second-order tensor field:
\begin{equation}
\Phi_{ij}(Q\mathbf{k}) = Q_{im} Q_{jn} \Phi_{mn}(\mathbf{k}).
\label{eq:Phi_rotation}
\end{equation}
By the invariant tensor representation theorem for vector-valued isotropic functions (\cite{Robertson1940}, \cite{Batchelor1953}), any second-order tensor depending solely on a single vector argument $\mathbf{k}$ and the isotropic tensors ($\delta_{ij}$, $\epsilon_{ijk}$) must be expressed in terms of the scalar invariant $k = |\mathbf{k}| = \sqrt{k_m k_m}$ as:
\begin{equation}
\Phi_{ij}(\mathbf{k}) = A_s(k)\,\delta_{ij} + B_s(k)\,\frac{k_i k_j}{k^2} + C_s(k)\,\epsilon_{ijp} \frac{k_p}{k},
\label{eq:Phi_general}
\end{equation}
where $A_s(k)$, $B_s(k)$, and $C_s(k)$ are scalar functions of the wavenumber magnitude $k$ (with normalization factors $k^2$ and $k$ chosen so that all three coefficient functions share identical physical dimensions).

\paragraph{2. Reflection Symmetry (Parity / Zero Mean Helicity):}
Under a spatial reflection through the origin ($\mathbf{x} \to -\mathbf{x}$), the polar velocity vector transforms as $\mathbf{u}(\mathbf{x}) \to \mathbf{u}'(\mathbf{x}) = -\mathbf{u}(-\mathbf{x})$. Taking the continuous spatial Fourier transform, the spectral velocity amplitude transforms under parity as:
\begin{align}
\hat{\mathbf{u}}'(\mathbf{k}) &= \frac{1}{(2\pi)^3} \iiint_{\mathbb{R}^3} [-\mathbf{u}(-\mathbf{x})]\,e^{-i \mathbf{k} \cdot \mathbf{x}}\,d\mathbf{x} \nonumber \\
&= -\frac{1}{(2\pi)^3} \iiint_{\mathbb{R}^3} \mathbf{u}(\mathbf{y})\,e^{-i(-\mathbf{k})\cdot\mathbf{y}}\,d\mathbf{y} \nonumber \\
&= -\hat{\mathbf{u}}(-\mathbf{k}),
\label{eq:u_hat_parity}
\end{align}
via the substitution $\mathbf{y} = -\mathbf{x}$ ($d\mathbf{x} = d\mathbf{y}$). In physical space, statistical reflection invariance requires the two-point velocity correlation tensor to satisfy $R_{ij}(\mathbf{r}) = R_{ij}(-\mathbf{r})$ (Eq.~\eqref{eq:Rij_reflection}). Applying this directly to the Fourier transform definition of $\Phi_{ij}(\mathbf{k})$ in Eq.~\eqref{eq:spectral_tensor_pair} and substituting $\mathbf{r}' = -\mathbf{r}$ ($d\mathbf{r} = d\mathbf{r}'$):
\begin{align}
\Phi_{ij}(\mathbf{k}) &= \frac{1}{(2\pi)^3} \iiint_{\mathbb{R}^3} R_{ij}(\mathbf{r})\,e^{-i \mathbf{k} \cdot \mathbf{r}}\,d\mathbf{r} \nonumber \\
&= \frac{1}{(2\pi)^3} \iiint_{\mathbb{R}^3} R_{ij}(-\mathbf{r})\,e^{-i \mathbf{k} \cdot \mathbf{r}}\,d\mathbf{r} \nonumber \\
&= \frac{1}{(2\pi)^3} \iiint_{\mathbb{R}^3} R_{ij}(\mathbf{r}')\,e^{-i(-\mathbf{k})\cdot\mathbf{r}'}\,d\mathbf{r}' \nonumber \\
&= \Phi_{ij}(-\mathbf{k}).
\label{eq:Phi_reflection}
\end{align}
Equivalently, in spectral space, the two-point spectral correlation satisfies:
\begin{equation}
\langle \hat{u}'_i(\mathbf{k})\,\hat{u}'^*_j(\mathbf{k}') \rangle = \langle [-\hat{u}_i(-\mathbf{k})]\,[-\hat{u}^*_j(-\mathbf{k}')] \rangle = \langle \hat{u}_i(-\mathbf{k})\,\hat{u}^*_j(-\mathbf{k}') \rangle,
\end{equation}
confirming the exact parity invariance $\Phi_{ij}(\mathbf{k}) = \Phi_{ij}(-\mathbf{k})$.

Evaluating Eq.~\eqref{eq:Phi_general} at $-\mathbf{k}$ (noting that the scalar wavenumber magnitude $|-\mathbf{k}| = k$ is strictly invariant):
\begin{align}
\Phi_{ij}(-\mathbf{k}) &= A_s(k)\,\delta_{ij} + B_s(k)\,\frac{(-k_i)(-k_j)}{k^2} + C_s(k)\,\epsilon_{ijp} \frac{-k_p}{k} \nonumber \\
&= A_s(k)\,\delta_{ij} + B_s(k)\,\frac{k_i k_j}{k^2} - C_s(k)\,\epsilon_{ijp} \frac{k_p}{k}.
\end{align}
Equating $\Phi_{ij}(\mathbf{k}) = \Phi_{ij}(-\mathbf{k})$ from Eq.~\eqref{eq:Phi_reflection} requires:
\begin{equation}
\Phi_{ij}(\mathbf{k}) - \Phi_{ij}(-\mathbf{k}) = 2C_s(k)\,\epsilon_{ijp}\frac{k_p}{k} = 0 \implies C_s(k) \equiv 0.
\end{equation}
Thus, reflection symmetry eliminates the pseudo-tensor (helical) contribution, guaranteeing that $\Phi_{ij}(\mathbf{k})$ is strictly symmetric in its indices ($\Phi_{ij}(\mathbf{k}) = \Phi_{ji}(\mathbf{k})$):
\begin{equation}
\Phi_{ij}(\mathbf{k}) = A_s(k)\,\delta_{ij} + B_s(k)\,\frac{k_i k_j}{k^2}.
\label{eq:Phi_symmetric}
\end{equation}

\paragraph{3. Incompressibility / Divergence-Free Constraint:}
Mass conservation in physical space for an incompressible fluid requires $\partial u_j / \partial x_j = 0$, which in terms of the two-point correlation tensor translates to $\partial R_{ij}(\mathbf{r}) / \partial r_j = 0$. Differentiating the inverse Fourier transform~\eqref{eq:spectral_tensor_pair}:
\begin{equation}
\frac{\partial R_{ij}(\mathbf{r})}{\partial r_j} = \frac{\partial}{\partial r_j} \iiint_{\mathbb{R}^3} \Phi_{ij}(\mathbf{k})\,e^{i \mathbf{k} \cdot \mathbf{r}}\,d\mathbf{k} = \iiint_{\mathbb{R}^3} i k_j \Phi_{ij}(\mathbf{k})\,e^{i \mathbf{k} \cdot \mathbf{r}}\,d\mathbf{k} = 0.
\end{equation}
Since this vanishing integral must hold for all separation vectors $\mathbf{r} \in \mathbb{R}^3$, the Fourier integrand must satisfy the exact orthogonality constraint:
\begin{equation}
k_j \Phi_{ij}(\mathbf{k}) = 0 \quad \text{for all } \mathbf{k} \neq \mathbf{0}.
\label{eq:fourier_continuity}
\end{equation}
Contracting the symmetric form~\eqref{eq:Phi_symmetric} with $k_j$:
\begin{equation}
k_j \Phi_{ij}(\mathbf{k}) = k_j \left( A_s(k)\,\delta_{ij} + B_s(k)\,\frac{k_i k_j}{k^2} \right) = A_s(k)\,k_i + B_s(k)\,\frac{k_i (k_j k_j)}{k^2} = \big[ A_s(k) + B_s(k) \big] k_i = 0.
\end{equation}
Because $k_i \neq 0$ for non-zero wavevectors, we obtain the exact algebraic link:
\begin{equation}
B_s(k) = -A_s(k).
\label{eq:Bs_equals_minus_As}
\end{equation}
Substituting Eq.~\eqref{eq:Bs_equals_minus_As} into Eq.~\eqref{eq:Phi_symmetric} yields the transverse projection tensor form:
\begin{equation}
\Phi_{ij}(\mathbf{k}) = A_s(k) \left( \delta_{ij} - \frac{k_i k_j}{k^2} \right) = \frac{A_s(k)}{k^2} \left( k^2 \delta_{ij} - k_i k_j \right),
\label{eq:Phi_projected}
\end{equation}
where $P_{ij}(\mathbf{k}) = \delta_{ij} - k_i k_j / k^2$ is the classical orthogonal projection operator onto the solenoidal (divergence-free) subspace perpendicular to the wavevector $\mathbf{k}$.

\paragraph{4. Connection to the 3D Turbulent Kinetic Energy Spectrum $E(k)$:}
The total turbulent kinetic energy per unit mass is defined by the trace of the zero-separation correlation tensor, $R_{ii}(\mathbf{0}) = \langle u_i(\mathbf{x}) u_i(\mathbf{x}) \rangle = 3 u'^2$. Evaluating the inverse Fourier transform~\eqref{eq:spectral_tensor_pair} at $\mathbf{r} = \mathbf{0}$:
\begin{equation}
\frac{3}{2} u'^2 = \frac{1}{2} R_{ii}(\mathbf{0}) = \frac{1}{2} \iiint_{\mathbb{R}^3} \Phi_{ii}(\mathbf{k})\,d\mathbf{k}.
\label{eq:TKE_trace_integral}
\end{equation}
Contracting the indices of the projected tensor~\eqref{eq:Phi_projected} yields
\begin{equation}
\Phi_{ii}(\mathbf{k}) = A_s(k) \left( \delta_{ii} - \frac{k_i k_i}{k^2} \right) = A_s(k) (3 - 1) = 2 A_s(k).
\end{equation}
Transforming the volume element $d\mathbf{k}$ to spherical polar coordinates ($d\mathbf{k} = k^2 \sin\theta\,dk\,d\theta\,d\phi$), the angular integration over the spherical shell $S^2$ of radius $k$ yields:
\begin{equation}
\frac{1}{2} \iiint_{\mathbb{R}^3} \Phi_{ii}(\mathbf{k})\,d\mathbf{k} = \frac{1}{2} \int_0^\infty 2 A_s(k) \left( \int_0^\pi \sin\theta\,d\theta \int_0^{2\pi} d\phi \right) k^2\,dk = \int_0^\infty 4\pi k^2 A_s(k)\,dk.
\label{eq:spherical_shell_int}
\end{equation}
By physical definition, the 3D energy spectrum $E(k)$ is the kinetic energy density per unit wavenumber magnitude, such that the total turbulent kinetic energy satisfies $\frac{3}{2} u'^2 = \int_0^\infty E(k)\,dk$. Equating integrands on $[0, \infty)$:
\begin{equation}
E(k) = 4\pi k^2 A_s(k) \implies A_s(k) = \frac{E(k)}{4\pi k^2}.
\label{eq:As_relation}
\end{equation}
Substituting Eq.~\eqref{eq:As_relation} into Eq.~\eqref{eq:Phi_projected} establishes the canonical Batchelor form of the isotropic velocity spectrum tensor:
\begin{equation}
\Phi_{ij}(\mathbf{k}) = \frac{E(k)}{4\pi k^4} \left( k^2 \delta_{ij} - k_i k_j \right).
\label{eq:Phi_ij_isotropic}
\end{equation}

\subsubsection{Role of \texorpdfstring{$\Phi_{ij}(\mathbf{k})$}{Phi\_ij(k)} in the Analysis}
The spectrum tensor $\Phi_{ij}(\mathbf{k})$ is the central mathematical bridge throughout this paper for three fundamental reasons:
\begin{enumerate}
\item \textbf{Exact Tensor Transformations to Physical Space:} It provides the exact closed-form link between the 3D spectrum $E(k)$ and physical correlations. Specifically, spherical integration of $\Phi_{ii}(\mathbf{k}) e^{i\mathbf{k}\cdot\mathbf{r}}$ yields the Schoenberg trace relation~\eqref{eq:trace_rep}, while $\Phi_{11}(\mathbf{k}) e^{i\mathbf{k}\cdot\mathbf{r}}$ directly produces the longitudinal kernel representation~\eqref{eq:f_kernel_rep}.
\item \textbf{Derivation of the 1D Spectrum $E_{11}(k_1)$:} In Section~\ref{subsec:1d_spectrum}, the 1D longitudinal spectrum $E_{11}(k_1)$ is obtained precisely by integrating $\Phi_{11}(\mathbf{k})$ over the transverse wavenumber plane $(k_2, k_3)$, which leads directly to the fundamental integral relation $E_{11}(k_1) = \int_{k_1}^\infty \frac{E(k)}{k}\left(1 - \frac{k_1^2}{k^2}\right)dk$.
\item \textbf{Kinematic Realizability from First Principles:}
Physical realizability demands that any arbitrary linear combination of velocity components, $v(\mathbf{x}) = a_i u_i(\mathbf{x})$ ($\mathbf{a} \in \mathbb{C}^3$), possesses non-negative energy across all spatial scales. In spectral space, the velocity field decomposes into uncorrelated Fourier increments via Cram\'er's representation, $\mathbf{u}(\mathbf{x}) = \iiint_{\mathbb{R}^3} e^{i\mathbf{k}\cdot\mathbf{x}}\, d\hat{\mathbf{u}}(\mathbf{k})$, with $\langle d\hat{u}_i(\mathbf{k}) d\hat{u}_j^*(\mathbf{k}') \rangle = \Phi_{ij}(\mathbf{k})\,\delta(\mathbf{k}-\mathbf{k}')\,d\mathbf{k}\,d\mathbf{k}'$. The energy in an individual Fourier mode projected along $\mathbf{a}$ is therefore
\begin{equation}
\langle |a_i d\hat{u}_i(\mathbf{k})|^2 \rangle = \Big[ a_i^* \Phi_{ij}(\mathbf{k}) a_j \Big] d\mathbf{k} \ge 0 \quad \text{for all } \mathbf{a} \in \mathbb{C}^3, \; \mathbf{k} \in \mathbb{R}^3,
\label{eq:quadratic_form_realizability}
\end{equation}
which requires $\Phi_{ij}(\mathbf{k})$ to be positive semi-definite pointwise. Substituting the isotropic solenoidal form~\eqref{eq:Phi_ij_isotropic} into this quadratic form:
\begin{equation}
a_i^* \Phi_{ij}(\mathbf{k}) a_j = \frac{E(k)}{4\pi k^2}\, a_i^* P_{ij}(\mathbf{k}) a_j = \frac{E(k)}{4\pi k^2} \left( |\mathbf{a}|^2 - \frac{|\mathbf{k} \cdot \mathbf{a}|^2}{k^2} \right) = \frac{E(k)}{4\pi k^2}\, |\mathbf{a}_\perp|^2,
\label{eq:realizability_derivation}
\end{equation}
where $\mathbf{a}_\perp = \mathbf{P}(\mathbf{k})\mathbf{a}$ is the divergence-free projection of $\mathbf{a}$ orthogonal to $\mathbf{k}$. Because $|\mathbf{a}_\perp|^2 \ge 0$ for all $\mathbf{a}$, non-negativity of the quadratic form holds if and only if
\begin{equation}
E(k) \ge 0 \quad \text{for all } k \in [0, \infty).
\label{eq:realizability_Ek}
\end{equation}
(Equivalently, the eigenvalue problem $\Phi_{ij}(\mathbf{k}) v_j = \lambda v_i$ follows directly from the projection operator $P_{ij}(\mathbf{k}) = \delta_{ij} - k_i k_j/k^2$: contracting with the longitudinal unit vector $\mathbf{e}_3 = \mathbf{k}/k$ gives $P_{ij} k_j = 0$, yielding the longitudinal eigenvalue $\lambda_3 = 0$ (enforcing incompressibility), while contracting with any orthogonal unit vector $\mathbf{e} \perp \mathbf{k}$ gives $P_{ij} e_j = e_i$, yielding a doubly degenerate transverse eigenvalue $\lambda_{1,2} = E(k)/(4\pi k^2)$ spanning the two solenoidal polarization directions. Hence, all eigenvalues of $\Phi_{ij}(\mathbf{k})$ are non-negative if and only if $E(k) \ge 0$.)
\end{enumerate}

\subsection{Schoenberg's Theorem and Radial Projection Kernels}

A natural theoretical question is whether the non-negativity of the 3D energy spectrum, $E(k) \ge 0$, mathematically guarantees the pointwise non-negativity of the spatial correlation functions ($f(r) \ge 0$ and $g(r) \ge 0$) via classical harmonic analysis.

By Bochner's theorem \cite{Bochner1959}, a spatial correlation tensor $R_{ij}(\mathbf{r})$ is positive-definite on $\mathbb{R}^3$ if and only if it is the Fourier transform of a non-negative matrix-valued spectral measure $\Phi_{ij}(\mathbf{k}) \ge 0$. As established in Eq.~\eqref{eq:realizability_Ek}, kinematic realizability is precisely equivalent to $E(k) \ge 0$. For isotropic scalar fields, Schoenberg's representation theorem \cite{Schoenberg1938} further shows that a radial function $\psi(r)$ on $\mathbb{R}^n$ is positive-definite if and only if it admits the integral representation
\begin{equation}
\psi(r) = \int_0^\infty \Omega_n(kr)\,d\nu(k), \qquad d\nu(k) \ge 0,
\label{eq:schoenberg_rep}
\end{equation}
where the radial projection kernel $\Omega_n(x)$ is given in terms of the Bessel function of the first kind:
\begin{equation}
\Omega_n(x) = \Gamma\left(\frac{n}{2}\right) \left(\frac{2}{x}\right)^{\frac{n-2}{2}} J_{\frac{n-2}{2}}(x).
\label{eq:schoenberg_kernel_n}
\end{equation}
In three dimensions ($n=3$), the Schoenberg kernel reduces to the elementary spherical sinc function:
\begin{equation}
\Omega_3(x) = \frac{\sin x}{x}.
\label{eq:omega3_kernel}
\end{equation}

To establish the direct connection with the turbulent velocity field, we evaluate the inverse Fourier transform of the spectral trace $\Phi_{ii}(\mathbf{k}) = 2E(k)/(4\pi k^2)$ to obtain the physical correlation trace $R_{ii}(\mathbf{r}) = u'^2 [f(r) + 2g(r)]$. Substituting Eq.~\eqref{eq:Phi_ij_isotropic} into Eq.~\eqref{eq:spectral_tensor_pair} and integrating in spherical coordinates ($\mathbf{k} \cdot \mathbf{r} = kr\cos\theta$, $d\mathbf{k} = k^2\sin\theta\,dk\,d\theta\,d\phi$):
\begin{align}
u'^2 \big[ f(r) + 2g(r) \big] &= \iiint_{\mathbb{R}^3} \Phi_{ii}(\mathbf{k})\,e^{i \mathbf{k} \cdot \mathbf{r}}\,d\mathbf{k} \nonumber \\
&= \int_0^\infty dk\,k^2 \left(\frac{2 E(k)}{4\pi k^2}\right) \int_0^{2\pi} d\phi \int_0^\pi e^{i kr \cos\theta} \sin\theta\,d\theta \nonumber \\
&= \int_0^\infty dk\,k^2 \left(\frac{E(k)}{2\pi k^2}\right) (2\pi) \left[ \frac{e^{i kr} - e^{-i kr}}{i kr} \right] \nonumber \\
&= 2 \int_0^\infty E(k)\,\frac{\sin(kr)}{kr}\,dk.
\label{eq:trace_rep}
\end{align}
Eq.~\eqref{eq:trace_rep} exactly matches Schoenberg's representation~\eqref{eq:schoenberg_rep} in $\mathbb{R}^3$ with the non-negative radial measure $d\nu(k) = 2E(k)\,dk \ge 0$.

Similarly, for the longitudinal correlation component $R_{11}(r \mathbf{e}_1) = u'^2 f(r)$, we evaluate the inverse Fourier transform of $\Phi_{11}(\mathbf{k}) = \frac{E(k)}{4\pi k^4}(k^2 - k_1^2)$ over spherical coordinates with polar angle $\theta$ measured from the separation axis $\mathbf{e}_1$ ($k_1 = k\cos\theta$, $\mathbf{k}\cdot\mathbf{r} = kr\cos\theta$):
\begin{align}
u'^2 f(r) &= \iiint_{\mathbb{R}^3} \Phi_{11}(\mathbf{k})\,e^{i k_1 r}\,d\mathbf{k} \nonumber \\
&= \int_0^\infty dk\,k^2 \frac{E(k)}{4\pi k^4} \int_0^{2\pi} d\phi \int_0^\pi k^2 (1 - \cos^2\theta)\,e^{i kr \cos\theta}\,\sin\theta\,d\theta \nonumber \\
&= \int_0^\infty dk\,E(k) \frac{1}{2} \int_{-1}^1 (1 - \mu^2)\,e^{i (kr) \mu}\,d\mu,
\end{align}
where $\mu = \cos\theta$. Integrating by parts with respect to $\mu$:
\begin{equation}
\frac{1}{2} \int_{-1}^1 (1 - \mu^2) \cos(x\mu)\,d\mu = 2 \left( \frac{\sin x - x \cos x}{x^3} \right).
\end{equation}
Therefore, we obtain the exact spherical projection representation:
\begin{equation}
u'^2 f(r) = \int_0^\infty E(k)\,K_{11}(kr)\,dk,
\label{eq:f_kernel_rep}
\end{equation}
where the longitudinal projection kernel $K_{11}(x)$ is
\begin{equation}
K_{11}(x) = 2 \left( \frac{\sin x - x\cos x}{x^3} \right) = 2\,\frac{j_1(x)}{x},
\label{eq:K11_kernel}
\end{equation}
with $j_1(x) = (\sin x - x\cos x)/x^2$ being the spherical Bessel function of order 1.

\begin{figure}[t]
\centering
\includegraphics[width=0.72\textwidth]{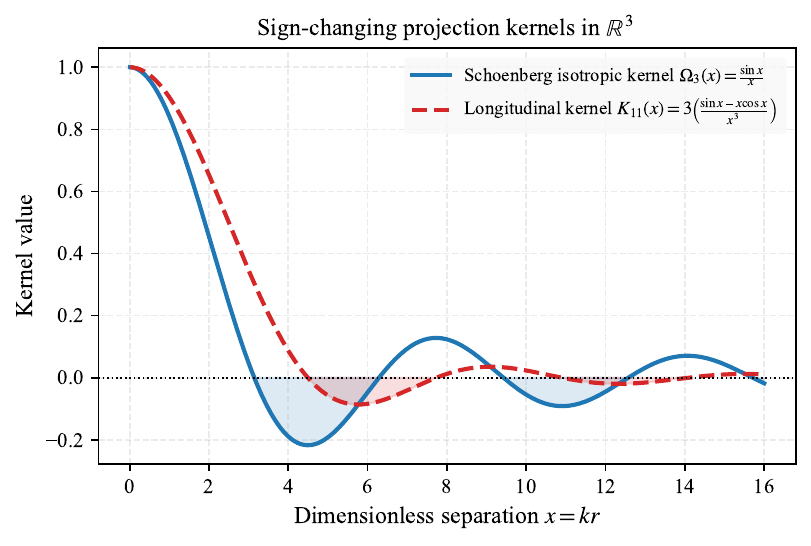}
\caption{The isotropic Schoenberg projection kernel $\Omega_3(x) = \sin(x)/x$ and the longitudinal projection kernel $K_{11}(x) = 2(\sin x - x\cos x)/x^3$ on $\mathbb{R}^3$. Both kernels oscillate and exhibit pronounced negative excursions, proving that positive-definiteness on $\mathbb{R}^3$ does not imply pointwise non-negativity of the physical-space correlation functions.}
\label{fig:schoenberg_kernel}
\end{figure}

Both kernels $\Omega_3(x)$ and $K_{11}(x)$ are plotted in Figure~\ref{fig:schoenberg_kernel}. As is evident:
\begin{itemize}
\item $\Omega_3(x)$ possesses its first zero at $x = \pi \approx 3.1416$ and attains a minimum of $\approx -0.2172$ at $x \approx 4.4934$.
\item $K_{11}(x)$ crosses zero at $x \approx 4.4934$ and attains a negative minimum of $\approx -0.0543$ near $x \approx 5.7635$.
\end{itemize}

This comparison clarifies the critical distinction between positive-definiteness and pointwise non-negativity:
\begin{enumerate}
\item \textbf{Positive-definiteness is guaranteed:} By Bochner's and Schoenberg's theorems, $E(k) \ge 0$ rigorously guarantees that $R_{ij}(\mathbf{r})$ is a positive-definite tensor field. This enforces standard algebraic Gram-matrix inequalities such as boundedness ($|f(r)| \le f(0) = 1$) and origin concavity ($f''(0) \le 0$).
\item \textbf{Pointwise non-negativity is not guaranteed:} Because the projection kernels $\Omega_3(x)$ and $K_{11}(x)$ are sign-changing and exhibit substantial negative excursions, integrating them against a strictly non-negative measure $E(k) \ge 0$ does \emph{not} force either the trace $f(r) + 2g(r)$ or the longitudinal correlation $f(r)$ to remain non-negative pointwise.
\item \textbf{Transverse vs. Longitudinal behavior:} For the transverse correlation $g(r)$, incompressibility kinematically demands a negative loop ($\int_0^\infty r g(r)\,dr = 0$, Eq.~\eqref{eq:g_zero_integral}). For the longitudinal correlation $f(r)$, three-dimensional positive-definiteness structurally fails to preclude negative values.
\end{enumerate}
Consequently, multidimensional harmonic analysis alone cannot resolve whether $f(r) \ge 0$, and an alternative one-dimensional formulation is required.

\section{The One-Dimensional Spectrum and P\'olya's Criterion}
\label{sec:polya_framework}

In this section, we reduce the correlation problem to a one-dimensional Fourier cosine transform, prove the unconditional monotonicity of the longitudinal spectrum $E_{11}(k_1)$, and establish P\'olya's convexity criterion as a rigorous sufficient condition for the non-negativity of $f(r)$.

\subsection{The 1D Longitudinal Spectrum \texorpdfstring{$E_{11}(k_1)$}{E11(k1)}}
\label{subsec:1d_spectrum}

The resolution of the sign problem lies in reducing the three-dimensional geometry to an exact one-dimensional Fourier transform. Rather than performing a spherical angular reduction (which introduces the oscillatory Bessel-type kernel $K_{11}$ in Eq.~\eqref{eq:f_kernel_rep}), we can obtain the 1D relation directly from the three-dimensional Fourier transform through tensor contraction and transverse projection.

Starting from the general 3D inverse Fourier representation of the correlation tensor:
\begin{equation}
R_{ij}(\mathbf{r}) = \iiint_{\mathbb{R}^3} \Phi_{ij}(\mathbf{k})\,e^{i \mathbf{k} \cdot \mathbf{r}}\,d\mathbf{k},
\end{equation}
we contract the tensor along the longitudinal unit vector $\mathbf{e}_1$ (taking the $(1,1)$ component) and evaluate it along the separation axis $\mathbf{r} = r \mathbf{e}_1$:
\begin{equation}
u'^2 f(r) = R_{11}(r\mathbf{e}_1) = \iiint_{\mathbb{R}^3} \Phi_{11}(k_1, k_2, k_3)\,e^{i k_1 r}\,dk_1\,dk_2\,dk_3.
\end{equation}
Because the phase factor $e^{i k_1 r}$ depends solely on the longitudinal wavenumber $k_1$, we can separate the longitudinal and transversal integrations:
\begin{equation}
u'^2 f(r) = \int_{-\infty}^\infty e^{i k_1 r} \left[ \iint_{\mathbb{R}^2} \Phi_{11}(k_1, k_2, k_3)\,dk_2\,dk_3 \right] dk_1.
\label{eq:f_proj_integral}
\end{equation}
The term inside the brackets is the planar projection of $\Phi_{11}(\mathbf{k})$ onto the $k_1$-axis across the transversal wavenumber plane $\mathbf{k}_\perp = (k_2, k_3)$. Defining the one-sided, one-dimensional longitudinal energy spectrum $E_{11}(k_1)$ for $k_1 \ge 0$ as
\begin{equation}
E_{11}(k_1) \equiv 2 \iint_{\mathbb{R}^2} \Phi_{11}(k_1, k_2, k_3)\,dk_2\,dk_3,
\label{eq:E11_def_plane_integral}
\end{equation}
and noting that $\Phi_{11}(\mathbf{k})$ is even in $k_1$, Eq.~\eqref{eq:f_proj_integral} reduces exactly to:
\begin{equation}
u'^2 f(r) = \int_0^\infty E_{11}(k_1)\,\cos(k_1 r)\,dk_1.
\label{eq:f_cos_E11}
\end{equation}

Equivalently, one can arrive at this identical, cosine transform pair by applying the classical 1D Wiener-Khinchin theorem directly to the one-dimensional stationary longitudinal velocity signal $u_1(x_1)$ recorded along a spatial streamline (e.g., as measured in experimental fluid mechanics via hot-wire anemometry and Taylor's frozen-flow hypothesis \cite{Taylor1935, Pope2000}), which immediately gives the inverse transform:
\begin{equation}
E_{11}(k_1) = \frac{2 u'^2}{\pi} \int_0^\infty f(r)\,\cos(k_1 r)\,dr.
\label{eq:E11_cos_f}
\end{equation}
Both formulations are kinematically exact and equivalent representations of the longitudinal correlation.

To express $E_{11}(k_1)$ in terms of the three-dimensional isotropic energy spectrum $E(k)$, we substitute the isotropic form $\Phi_{11}(\mathbf{k}) = \frac{E(k)}{4\pi k^4}(k^2 - k_1^2)$ from Eq.~\eqref{eq:Phi_ij_isotropic} into Eq.~\eqref{eq:E11_def_plane_integral}:
\begin{equation}
E_{11}(k_1) = 2 \iint_{\mathbb{R}^2} \frac{E(k)}{4\pi k^4} (k^2 - k_1^2)\,dk_2\,dk_3.
\end{equation}
Transforming the $(k_2, k_3)$ plane to polar coordinates $(k_\perp, \phi)$ where $k_\perp^2 = k_2^2 + k_3^2 = k^2 - k_1^2$ and $dk_2\,dk_3 = k_\perp\,dk_\perp\,d\phi = \frac{1}{2}\,d(k_\perp^2)\,d\phi = k\,dk\,d\phi$ (with $k$ ranging from $k_1$ to $\infty$):
\begin{equation}
E_{11}(k_1) = 2 \int_0^{2\pi} d\phi \int_{k_1}^\infty \frac{E(k)}{4\pi k^4} (k^2 - k_1^2)\,k\,dk = \int_{k_1}^\infty \frac{E(k)}{k} \left( 1 - \frac{k_1^2}{k^2} \right)\,dk.
\label{eq:E11_from_Ek}
\end{equation}

\subsection{Unconditional Monotonicity of \texorpdfstring{$E_{11}(k_1)$}{E11(k1)}}

We now establish a general mathematical property of the 1D longitudinal spectrum.

\begin{theorem}[Unconditional Monotonicity]
\label{thm:monotonicity}
For any kinematically admissible, non-negative three-dimensional energy spectrum $E(k) \ge 0$ with $\int_0^\infty E(k)\,dk < \infty$, the one-dimensional longitudinal spectrum $E_{11}(k_1)$ is strictly monotonically decreasing for all $k_1 > 0$:
\begin{equation}
\frac{d E_{11}(k_1)}{d k_1} \le 0.
\label{eq:thm_mono}
\end{equation}
\end{theorem}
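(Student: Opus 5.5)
The plan is to differentiate the integral representation \eqref{eq:E11_from_Ek} directly with respect to $k_1$, using the Leibniz rule. Write
\[
E_{11}(k_1)=\int_{k_1}^\infty \frac{E(k)}{k}\Bigl(1-\frac{k_1^2}{k^2}\Bigr)\,dk .
\]
The integrand depends on $k_1$ through both the lower limit and the factor $(1-k_1^2/k^2)$. The boundary term from the lower limit is $-\frac{E(k_1)}{k_1}\bigl(1-\frac{k_1^2}{k_1^2}\bigr)=0$, because the projection factor vanishes exactly at $k=k_1$. The remaining contribution is
\[
\frac{dE_{11}}{dk_1}=-2k_1\int_{k_1}^\infty \frac{E(k)}{k^3}\,dk ,
\]
and this is manifestly $\le 0$ for $k_1>0$ whenever $E\ge 0$, by \eqref{eq:realizability_Ek}.

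To make the differentiation rigorous under the stated hypotheses, I would avoid assuming that $E$ is continuous. Instead I will rewrite the expression as
\[
E_{11}(k_1)=\int_{k_1}^\infty \frac{E(k)}{k}\,dk-k_1^2\int_{k_1}^\infty\frac{E(k)}{k^3}\,dk .
\]
For $k_1>0$, both integrals converge absolutely, since $1/k\le 1/k_1$ and $1/k^3\le 1/k_1^3$ on the domain and $\int E<\infty$.

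Each tail integral $G_m(k_1)=\int_{k_1}^\infty E(k)k^{-m}\,dk$ is absolutely continuous, with $G_m'(k_1)=-E(k_1)k_1^{-m}$ almost everywhere (Lebesgue differentiation). The product rule then gives
\[
E_{11}'=-\frac{E(k_1)}{k_1}-2k_1G_3+k_1^2\frac{E(k_1)}{k_1^3}=-2k_1G_3(k_1)
\]
almost everywhere. Since $G_3$ is continuous, the right-hand side is continuous, so $E_{11}$ is in fact $C^1$ on $(0,\infty)$ with this derivative everywhere.

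An even cleaner route, which I would use as a cross-check, avoids differentiation entirely. For $k_1<k_1'$, the integrand in \eqref{eq:E11_from_Ek} decreases pointwise in $k_1$, and the domain of integration shrinks. Both effects together give $E_{11}(k_1')\le E_{11}(k_1)$.

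The only delicate step is justifying the interchange of derivative and integral near $k_1\to 0$, where $\int E/k^3$ may diverge. This is why the statement is restricted to $k_1>0$, and there the bound $k\ge k_1$ controls everything.

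On strictness: $dE_{11}/dk_1<0$ holds precisely when $E$ is not almost-everywhere zero on $(k_1,\infty)$. So "strictly decreasing" holds wherever the spectrum has support beyond $k_1$, which covers every spectrum of unbounded support. The inequality \eqref{eq:thm_mono}, as stated, holds unconditionally.
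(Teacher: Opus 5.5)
Your proposal is correct and follows the paper's proof: you apply the Leibniz rule, the boundary term at $k=k_1$ vanishes, and you obtain $dE_{11}/dk_1 = -2k_1\int_{k_1}^\infty E(k)\,k^{-3}\,dk \le 0$. Your absolute-continuity justification, the differentiation-free cross-check, and your remark that strict decrease holds only where $E$ has support beyond $k_1$ are all sound and go beyond the paper, whose proof establishes only the non-strict inequality even though the statement says ``strictly''.
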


\begin{proof}
Differentiating Eq.~\eqref{eq:E11_from_Ek} with respect to $k_1$ using Leibniz's rule:
\begin{equation}
\frac{d E_{11}(k_1)}{d k_1} = - \left[ \frac{E(k)}{k} \left( 1 - \frac{k_1^2}{k^2} \right) \right]_{k = k_1} + \int_{k_1}^\infty \frac{\partial}{\partial k_1} \left[ \frac{E(k)}{k} \left( 1 - \frac{k_1^2}{k^2} \right) \right] dk.
\end{equation}
The boundary term at $k = k_1$ evaluates to $(E(k_1)/k_1)(1 - 1) = 0$. The derivative under the integral is
\begin{equation}
\frac{\partial}{\partial k_1} \left( 1 - \frac{k_1^2}{k^2} \right) = -\frac{2 k_1}{k^2}.
\end{equation}
Thus, we obtain the exact closed-form first derivative:
\begin{equation}
\frac{d E_{11}(k_1)}{d k_1} = -2 k_1 \int_{k_1}^\infty \frac{E(k)}{k^3}\,dk.
\label{eq:dE11_exact}
\end{equation}
Since $E(k) \ge 0$ by kinematic realizability and $k_1 > 0$, the integrand $E(k)/k^3 \ge 0$ everywhere on $[k_1, \infty)$. Therefore, the integral is non-negative, and the factor $-2k_1 < 0$ strictly ensures
\begin{equation}
\frac{d E_{11}(k_1)}{d k_1} \le 0 \quad \text{for all } k_1 \ge 0.
\end{equation}
Monotonicity is unconditional; it holds for every realizable turbulent state without exception.
\end{proof}

\subsection{Convexity and P\'olya's Theorem}

We now examine the second derivative of $E_{11}(k_1)$.

\begin{theorem}[Second Derivative of $E_{11}$]
\label{thm:second_derivative}
The second derivative of the 1D longitudinal spectrum is given by
\begin{equation}
\frac{d^2 E_{11}(k_1)}{d k_1^2} = \frac{2 E(k_1)}{k_1^2} - 2 \int_{k_1}^\infty \frac{E(k)}{k^3}\,dk.
\label{eq:d2E11_exact}
\end{equation}
\end{theorem}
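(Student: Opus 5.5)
We obtain the second derivative by differentiating the closed-form first derivative, Eq.~\eqref{eq:dE11_exact}, from the proof of Theorem~\ref{thm:monotonicity}. That identity writes $dE_{11}/dk_1$ as a product $-2k_1\,I(k_1)$ with
\begin{equation*}
I(k_1) = \int_{k_1}^\infty \frac{E(k)}{k^3}\,dk .
\end{equation*}
So the plan is a single application of the product rule together with the fundamental theorem of calculus for a variable lower limit. No second Leibniz differentiation under the integral sign is needed.

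First, $\frac{d}{dk_1}(-2k_1) = -2$, which contributes the term $-2\,I(k_1)$. Second, for $k_1>0$ we have $I'(k_1) = -E(k_1)/k_1^3$, by the fundamental theorem of calculus applied at the lower limit. Multiplying by $-2k_1$ gives $+2E(k_1)/k_1^2$. Adding the two pieces yields Eq.~\eqref{eq:d2E11_exact}.

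As a consistency check, we could instead differentiate Eq.~\eqref{eq:E11_from_Ek} twice directly via Leibniz's rule. In that route the boundary term from the first differentiation vanishes identically, because of the factor $(1-k_1^2/k^2)$ at $k=k_1$. The surviving boundary term at the second differentiation, $-(-2k_1/k_1^2)E(k_1)/k_1 = 2E(k_1)/k_1^2$, reproduces the same formula.

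The only real obstacle is justifying these differentiations under the stated hypotheses. We need $I(k_1)$ finite for $k_1>0$; this follows since $E(k)/k^3 \le E(k)/k_1^3$ on $[k_1,\infty)$ and $\int E\,dk<\infty$. The pointwise identity $I'(k_1) = -E(k_1)/k_1^3$ requires $E$ to be continuous at $k_1$. For a merely integrable $E$, it holds only at Lebesgue points, that is, almost everywhere. I would therefore state the result for $k_1>0$ under the standing assumption that $E$ is continuous, as for all model spectra considered in the paper. Under that assumption $dE_{11}/dk_1$ is $C^1$ on $(0,\infty)$, and the formula holds pointwise.

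The point $k_1=0$ is excluded, since $E(k_1)/k_1^2$ may be singular there. That degenerate case is exactly where the infrared behaviour ($k^4$ versus $k^2$) will matter later in the paper.
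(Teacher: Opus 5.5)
Your proof is correct and follows the paper's own argument: the paper also differentiates $-2k_1\int_{k_1}^\infty E(k)k^{-3}\,dk$ using the product rule and the variable-lower-limit derivative, which gives $-2\int_{k_1}^\infty E(k)k^{-3}\,dk + 2E(k_1)/k_1^2$. Your extra hypotheses are left implicit in the paper: restricting to $k_1>0$, bounding the tail integral via $\int E\,dk<\infty$, and assuming continuity of $E$ at $k_1$ so that the formula holds pointwise rather than only almost everywhere. They sharpen the statement without changing the route.
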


\begin{proof}
Differentiating Eq.~\eqref{eq:dE11_exact} with respect to $k_1$ using the product rule and Leibniz's rule:
\begin{align}
\frac{d^2 E_{11}(k_1)}{d k_1^2} &= \frac{d}{dk_1} \left[ -2 k_1 \int_{k_1}^\infty \frac{E(k)}{k^3}\,dk \right] \nonumber \\
&= -2 \int_{k_1}^\infty \frac{E(k)}{k^3}\,dk - 2 k_1 \left( - \frac{E(k_1)}{k_1^3} \right) \nonumber \\
&= \frac{2 E(k_1)}{k_1^2} - 2 \int_{k_1}^\infty \frac{E(k)}{k^3}\,dk.
\end{align}
\end{proof}

Unlike the first derivative, the two terms in Eq.~\eqref{eq:d2E11_exact} have \emph{opposite signs}: the local term $2E(k_1)/k_1^2$ is positive, while the non-local tail integral $-2\int_{k_1}^\infty E(k)/k^3\,dk$ is negative. Thus, convexity ($d^2 E_{11}/dk_1^2 \ge 0$) is \textbf{not automatic}; it represents a non-trivial shape condition on $E(k)$.

The significance of convexity is established by P\'olya's classical theorem in harmonic analysis \cite{Polya1949}:

\begin{theorem}[P\'olya's Criterion (1918, 1949)]
\label{thm:polya}
Let $h(k)$ be a real-valued, even, continuous function on $(-\infty, \infty)$ such that:
\begin{enumerate}
\item $h(k) \ge 0$ for all $k \ge 0$,
\item $h(k)$ is monotonically decreasing on $[0, \infty)$,
\item $h(k)$ is convex on $[0, \infty)$ (i.e., $h''(k) \ge 0$ in the sense of distributions),
\item $\lim_{k \to \infty} h(k) = 0$.
\end{enumerate}
Then its Fourier transform (cosine transform),
\begin{equation}
\hat{h}(r) = \int_0^\infty h(k)\,\cos(kr)\,dk,
\end{equation}
is strictly non-negative everywhere: $\hat{h}(r) \ge 0$ for all $r \in [0, \infty)$.
\end{theorem}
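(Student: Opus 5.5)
The plan is to use the classical "triangle decomposition" proof, writing $h$ as a non-negative superposition of tent functions whose cosine transforms are Fejér kernels. The statement does not assume integrability of $h$, so $\hat h(r)$ is only guaranteed to exist as an improper integral for $r>0$. I will therefore prove $\hat h(r)\ge 0$ for $r>0$, and treat $r=0$ separately: there $\hat h(0)=\int_0^\infty h\,dk\in[0,\infty]$ is trivially non-negative, being an integral of a non-negative function.

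\textbf{Step 1: regularity of $h'$.} Because $h$ is convex on $[0,\infty)$, its right derivative $h'_+(k)$ exists everywhere, is non-decreasing, and is $\le 0$ by monotonicity. Convexity together with $h\to 0$ forces $h'_+(k)\to 0$ as $k\to\infty$; otherwise $h$ would eventually decrease linearly and become negative. The distributional second derivative is then a non-negative Radon measure $d\mu(t)=dh'_+(t)$ on $(0,\infty)$. Integrating twice and using $h(\infty)=h'_+(\infty)=0$ gives the representation
\begin{equation*}
h(k)=\int_{(0,\infty)} (t-k)_+\,d\mu(t),\qquad k>0,
\end{equation*}
where $(x)_+=\max(x,0)$. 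Each $(t-k)_+$ is a tent of height $t$ and half-width $t$.

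\textbf{Step 2: transform each tent.} For fixed $t$, an elementary integration gives
\begin{equation*}
\int_0^t (t-k)\cos(kr)\,dk=\frac{1-\cos(tr)}{r^2}=\frac{2\sin^2(tr/2)}{r^2}\ \ge 0 .
\end{equation*}

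\textbf{Step 3: exchange the order of integration.} The formal conclusion is
\begin{equation*}
\hat h(r)=\int_{(0,\infty)}\frac{1-\cos(tr)}{r^2}\,d\mu(t)\ \ge 0 .
\end{equation*}
This interchange is the main obstacle, because $h$ need not be integrable and the outer cosine integral is only conditionally convergent. I would handle it by truncation. Fix $r>0$ and $K>0$, and compute $\int_0^K h(k)\cos(kr)\,dk$ by integrating by parts twice. The first integration gives the boundary term $h(K)\sin(Kr)/r$ plus $-\frac1r\int_0^K h'_+(k)\sin(kr)\,dk$. The second integration by parts is done in Stieltjes form against $d\mu$, which produces the boundary term $-h'_+(K)(1-\cos Kr)/r^2$ plus $\frac{1}{r^2}\int_{(0,K]}(1-\cos kr)\,d\mu(k)$.

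As $K\to\infty$, both boundary terms vanish because $h(K)\to0$ and $h'_+(K)\to0$. The remaining integral converges by monotone convergence, since its integrand is non-negative. This shows simultaneously that the improper integral exists for every $r>0$ and that it equals a non-negative quantity.

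Finally, the endpoint $k=0$ requires care if $h'_+(0)$ is infinite or $\mu$ has mass near $0$. I would handle this by starting the integration at $\varepsilon>0$ and letting $\varepsilon\to0$, using the continuity of $h$ at $0$ and the boundedness of $(1-\cos kr)/r^2\le k^2/2$ near $0$.
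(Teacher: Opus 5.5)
Your proof is correct. The paper gives no proof of this theorem itself (it only cites P\'olya), but its identity in Eq.~\eqref{eq:f_by_parts_second_deriv}, which integrates by parts twice to pair the non-negative kernel $1-\cos(kr)$ with the non-negative second derivative, is exactly your Step~3, and your truncation and Stieltjes treatment make that identity rigorous for non-smooth, non-integrable $h$. The one detail you leave implicit, finiteness of the limiting integral, is immediate: $1-\cos x\le |x|$, and letting $k\to0$ in your Step~1 representation gives $\int_{(0,\infty)} t\,d\mu(t)=h(0)$, so $0\le\hat h(r)\le h(0)/r$.
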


Combining Theorem~\ref{thm:monotonicity}, Theorem~\ref{thm:second_derivative}, and Theorem~\ref{thm:polya} yields an immediate, rigorous sufficient condition:

\begin{corollary}[Sufficient Condition for Non-Negativity]
\label{cor:sufficient}
If the 3D energy spectrum $E(k)$ satisfies the convexity inequality
\begin{equation}
\frac{E(k_1)}{k_1^2} \ge \int_{k_1}^\infty \frac{E(k)}{k^3}\,dk \quad \text{for all } k_1 > 0,
\label{eq:convexity_condition}
\end{equation}
then the longitudinal autocorrelation function $f(r)$ is strictly non-negative for all $r \ge 0$.
\end{corollary}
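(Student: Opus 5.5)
The plan is to verify the four hypotheses of P\'olya's criterion (Theorem~\ref{thm:polya}) for the even extension $h(k_1) = E_{11}(|k_1|)$. The conclusion then follows from the cosine representation Eq.~\eqref{eq:f_cos_E11}: $u'^2 f(r) = \int_0^\infty E_{11}(k_1)\cos(k_1 r)\,dk_1 = \hat h(r) \ge 0$. Dividing by $u'^2>0$ gives $f(r)\ge 0$.

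The hypotheses are checked in order.

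\textbf{Non-negativity.} From Eq.~\eqref{eq:E11_from_Ek}, the integrand $\frac{E(k)}{k}(1-k_1^2/k^2)$ is non-negative on $k\ge k_1$, so $E_{11}\ge 0$.

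\textbf{Monotonicity.} This is Theorem~\ref{thm:monotonicity}.

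\textbf{Convexity.} By Theorem~\ref{thm:second_derivative}, $E_{11}''(k_1) = 2\big[E(k_1)/k_1^2 - \int_{k_1}^\infty E(k)k^{-3}\,dk\big]$. Hypothesis \eqref{eq:convexity_condition} makes this non-negative for every $k_1>0$. To avoid assuming $E$ is continuous, I would instead argue with the first derivative. Eq.~\eqref{eq:dE11_exact} gives $E_{11}'(k_1) = -2k_1\int_{k_1}^\infty E/k^3$. This function is absolutely continuous on compact subsets of $(0,\infty)$, and its a.e.\ derivative is the expression above. Convexity on $(0,\infty)$ is then equivalent to $E_{11}'$ being nondecreasing, which follows from the a.e.\ non-negativity of its derivative.

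\textbf{Vanishing at infinity.} For $k\ge k_1$ the integrand satisfies $E(k)/k \le E(k)/k_1$. Hence $E_{11}(k_1)\le k_1^{-1}\int_0^\infty E\,dk\to 0$ as $k_1\to\infty$, using finite energy.

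The main obstacle is behaviour at the origin, which affects both continuity of $h$ and convexity on the closed half-line $[0,\infty)$, where P\'olya requires it. First, $E_{11}(0)=\int_0^\infty E(k)/k\,dk$ must be finite. This is exactly $\frac{\pi}{2}u'^2 L$ by Eq.~\eqref{eq:E11_cos_f}, so I would assume a finite integral scale $L$, which the paper treats as physical. Given that, $E_{11}$ is continuous at $0$ by monotone convergence.

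Second, the integral $\int_{k_1}^\infty E/k^3$ may blow up as $k_1\to 0$. Note that \eqref{eq:convexity_condition} then forces $E(k_1)/k_1^2$ to blow up at least as fast. This does not break convexity: a function that is convex on $(0,\infty)$ and continuous at $0$ is convex on $[0,\infty)$, since the chord inequality passes to the limit. So I would state that reduction explicitly.

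Finally, P\'olya's theorem yields an integrable $h$ only if $E_{11}\in L^1$. The cosine integral may only converge conditionally, but it is well defined since $E_{11}$ decreases to $0$ (Dirichlet test). It coincides with $u'^2 f(r)$ for $r>0$, while at $r=0$ it equals $u'^2$ directly. This completes the argument.
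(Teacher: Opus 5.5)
Your proposal is correct and takes the same route as the paper, which obtains the corollary in one line by combining Theorem~\ref{thm:monotonicity}, the second-derivative formula of Theorem~\ref{thm:second_derivative} together with the hypothesis, and P\'olya's criterion (Theorem~\ref{thm:polya}); your checks of non-negativity, decay at infinity and behaviour at the origin make explicit what the paper leaves implicit. Some minor remarks: Eq.~\eqref{eq:E11_cos_f} gives $E_{11}(0)=\tfrac{2}{\pi}u'^2L$ (not $\tfrac{\pi}{2}u'^2L$). Also, $E_{11}\in L^1$ automatically, because $\int_0^\infty E_{11}\,dk_1=\tfrac{2}{3}\int_0^\infty E\,dk=u'^2$, so the cosine integral converges absolutely. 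Finally, you can drop the finite-$L$ assumption, which the hypothesis does not imply, by applying P\'olya to $E_{11}(|k_1|+\epsilon)$ and letting $\epsilon\to 0$ by dominated convergence.
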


\subsection{Convexity of Power-Law Spectra and the Kolmogorov Inertial Range}

We can utilize Eq.~\eqref{eq:convexity_condition} on a general power-law spectrum parameterized by the spectral decay exponent $s$:
\begin{equation}
E(k) = C\,k^{-s} \quad (C > 0).
\label{eq:general_power_law}
\end{equation}
For the upper-tail integral in Eq.~\eqref{eq:convexity_condition} to converge as $k \to \infty$, the integrand must decay faster than $k^{-1}$, which imposes the convergence requirement $s+3 > 1$, or equivalently:
\begin{equation}
s > -2 \quad (s+2 > 0).
\label{eq:convergence_req}
\end{equation}
Under this condition, evaluating the two terms in Eq.~\eqref{eq:convexity_condition} gives:
\begin{align}
\frac{E(k_1)}{k_1^2} &= C\,k_1^{-(s+2)}, \\
\int_{k_1}^\infty \frac{E(k)}{k^3}\,dk &= C \int_{k_1}^\infty k^{-(s+3)}\,dk = \frac{C}{s+2}\,k_1^{-(s+2)}.
\end{align}
Substituting these expressions into the second derivative Eq.~\eqref{eq:d2E11_exact} yields
\begin{equation}
\frac{d^2 E_{11}(k_1)}{d k_1^2} = 2 C\,k_1^{-(s+2)} \left( 1 - \frac{1}{s+2} \right) = 2 C\,k_1^{-(s+2)} \left( \frac{s+1}{s+2} \right).
\label{eq:power_law_convex}
\end{equation}
Because convergence at infinity guarantees $s+2 > 0$, the sign of the second derivative is determined entirely by the numerator:
\begin{equation}
\frac{d^2 E_{11}(k_1)}{d k_1^2} \ge 0 \quad \Longleftrightarrow \quad s \ge -1 \quad (\text{or } p \le 1 \text{ for } E(k) \propto k^p).
\label{eq:exponent_bound}
\end{equation}
(Note that although the algebraic ratio $(s+1)/(s+2) \ge 0$ also formally holds for $s < -2$, that branch corresponds to ultraviolet divergence of the integral where the power-law cannot extend to $k \to \infty$.)

When combined with the requirements of finite total energy and vanishing spectral tail ($\lim_{k \to \infty} E_{11}(k) = 0$, which requires $s > 0$), any physically decaying cascade scaling ($s > 0$) strictly and unconditionally satisfies $s \ge -1$.

In particular, for Kolmogorov's 1941 inertial-range scaling:
\begin{equation}
E(k) = C_K\,\epsilon^{2/3}\,k^{-5/3} \quad (s = 5/3),
\label{eq:kolmogorov_spectrum}
\end{equation}
where $C_K \approx 1.5$ is the Kolmogorov constant and $\epsilon$ is the kinetic energy dissipation rate per unit mass, the second derivative evaluates to
\begin{equation}
\frac{d^2 E_{11}(k_1)}{d k_1^2} = 2 C_K\,\epsilon^{2/3}\,k_1^{-11/3} \left( 1 - \frac{3}{11} \right) = \frac{16}{11}\,C_K\,\epsilon^{2/3}\,k_1^{-11/3} > 0.
\label{eq:kolmogorov_convex}
\end{equation}
Because $s = 5/3 > -1$ (i.e., $1 \ge 3/11$), the Kolmogorov spectrum is \textbf{strictly and robustly convex} throughout the inertial range. If a turbulent spectrum consisted purely of decaying cascade scaling with $s \ge -1$, P\'olya's criterion would guarantee $f(r) \ge 0$ unconditionally.

\section{Infrared Constraints and Near-Origin Concavity}
\label{sec:infrared}

In this section, we examine large-scale hydrodynamical invariants and prove that physically realizable turbulent spectra belonging to both the Batchelor--Proudman ($E(k) \propto k^4$) and Saffman ($E(k) \propto k^2$) infrared universality classes, are necessarily non-convex (strictly concave) in a finite neighborhood of the origin $k_1 = 0$. We then show when and how the convexity criterion is obeyed in the inertial range, and how broadband inertial scaling overcomes this infrared concavity in mature turbulence.

\subsection{Large-Scale Invariants and Infrared Expansions}

In any real physical fluid, the Kolmogorov cascade scaling cannot extend down to $k \to 0$, as that would imply divergent total turbulent kinetic energy. The asymptotic behavior of $E(k)$ as $k \to 0$ is dictated by the conservation of large-scale hydrodynamical invariants under the Navier--Stokes equations \cite{BatchelorProudman1956, Saffman1967, Davidson2004}:
\begin{enumerate}
\item \textbf{Saffman Turbulence ($E(k) \sim k^2$):} When the initial flow possesses a non-zero linear momentum or non-vanishing long-range velocity impulse correlations, the Saffman invariant $L_S = \frac{1}{4\pi} \int_{\mathbb{R}^3} \langle \mathbf{u}(\mathbf{x}) \cdot \mathbf{u}(\mathbf{x}+\mathbf{r}) \rangle\,d\mathbf{r}$ is an exact invariant of the decaying Navier--Stokes equations \cite{Saffman1967, Davidson2004}. Taylor-series expansion of the spectral tensor near $\mathbf{k} = \mathbf{0}$ yields the infrared asymptotic expansion:
\begin{equation}
E(k) = C_S\,k^2 + C_{S,4}\,k^4 + O(k^6) \quad (k \to 0), \qquad C_S = 4\pi L_S > 0.
\label{eq:saffman_expansion}
\end{equation}
\item \textbf{Batchelor--Proudman Turbulence ($E(k) \sim k^4$):} When turbulence is generated by localized, momentum-free stirring (such as passive grids in wind tunnels or impulsively stirred finite domains), the linear impulse identically vanishes ($L_S = 0$). Incompressible non-local pressure fluctuations generate quadrupolar long-range velocity correlations ($\sim r^{-5}$), establishing the invariance of Loitsianskii's integral $\Lambda = u'^2 \int_0^\infty r^4 f(r)\,dr$ and fixing the infrared expansion \cite{BatchelorProudman1956, Davidson2004}:
\begin{equation}
E(k) = C_B\,k^4 + C_{B,6}\,k^6 + O(k^8) \quad (k \to 0), \qquad C_B = \frac{2}{3\pi} \Lambda > 0.
\label{eq:batchelor_expansion}
\end{equation}
\end{enumerate}

\subsection{Evaluation of the Convexity Criterion for Physical Infrared Spectra}

Recall from Corollary~\ref{cor:sufficient} and Eq.~\eqref{eq:d2E11_exact} that the exact second derivative governing the convexity of $E_{11}(k_1)$ is
\begin{equation}
\frac{d^2 E_{11}(k_1)}{d k_1^2} = 2 \left[ \frac{E(k_1)}{k_1^2} - \int_{k_1}^\infty \frac{E(k)}{k^3}\,dk \right].
\label{eq:convexity_diff}
\end{equation}
We now prove that \emph{neither} physical infrared family satisfies P\'olya's convexity criterion globally on $[0, \infty)$, because both families strictly violate convexity in an infrared neighborhood of the origin.

\begin{theorem}[Near-Origin Concavity of Physical Turbulence Spectra]
\label{thm:infrared_concavity}
Let $E(k) \ge 0$ be physically admissible energy spectrum of homogeneous isotropic turbulence with finite total energy $\mathcal{K} = \int_0^\infty E(k)\,dk < \infty$ of the following categories:
\begin{enumerate}
\item \textbf{Batchelor Family ($p = 4$):} For any spectrum satisfying Batchelor's infrared scaling $E(k) = C_B k^4 + O(k^6)$ as $k \to 0$, $E_{11}(k_1)$ is strictly concave on a finite interval $(0, k_c)$ with a finite negative origin limit:
\begin{equation}
\lim_{k_1 \to 0^+} \frac{d^2 E_{11}(k_1)}{d k_1^2} = -2 \int_0^\infty \frac{E(k)}{k^3}\,dk = -2 I_0 < 0.
\label{eq:batchelor_limit}
\end{equation}
Near the origin, the exact asymptotic behavior is:
\begin{equation}
\frac{d^2 E_{11}(k_1)}{d k_1^2} = -2 I_0 + 3 C_B\,k_1^2 + O(k_1^4) < 0 \quad \text{for all } k_1 \in \big(0, \sqrt{2I_0/(3C_B)}\,\big).
\label{eq:batchelor_asymptotic}
\end{equation}

\item \textbf{Saffman Family ($p = 2$):} For any spectrum satisfying Saffman's infrared scaling $E(k) = C_S k^2 + C_{S,4} k^4 + O(k^6)$ as $k \to 0$ with $C_S > 0$, $E_{11}(k_1)$ is strictly concave on a finite interval $(0, k_c)$ with a logarithmic infrared divergence to $-\infty$:
\begin{equation}
\frac{d^2 E_{11}(k_1)}{d k_1^2} = 2 C_S\,\ln k_1 + 2(C_S - J_0) + O(k_1^2) \longrightarrow -\infty \quad \text{as } k_1 \to 0^+,
\label{eq:saffman_limit}
\end{equation}
where $J_0 = \int_0^{k_0} \frac{E(k) - C_S k^2}{k^3}\,dk + \int_{k_0}^\infty \frac{E(k)}{k^3}\,dk - C_S \ln k_0$ is a finite constant independent of $k_1$.
\end{enumerate}
Consequently, spectra of both families are strictly concave in an infrared neighborhood $(0, k_c)$ of the origin, and global convexity on $[0, \infty)$ is unconditionally violated.
\end{theorem}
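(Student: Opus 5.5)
The plan is to work directly from the exact second-derivative formula of Theorem~\ref{thm:second_derivative}, in the form of Eq.~\eqref{eq:convexity_diff}. I will insert the infrared expansions \eqref{eq:batchelor_expansion} and \eqref{eq:saffman_expansion} into its two terms, the local term $E(k_1)/k_1^2$ and the tail term $T(k_1)=\int_{k_1}^\infty E(k)k^{-3}\,dk$. I assume throughout that $E$ is continuous on $(0,\infty)$, so that $d^2E_{11}/dk_1^2$ is continuous for $k_1>0$. I also assume the expansions hold on some fixed interval $[0,k_0]$ with uniform remainder bounds.

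For the Batchelor family, I first check that $I_0=\int_0^\infty E(k)k^{-3}\,dk$ is finite.
\begin{itemize}
\item Near the origin, $E(k)/k^3 = C_B k + O(k^3)$, which is integrable.
\item On $[k_0,\infty)$, $E(k)/k^3\le E(k)/k_0^3$, which is integrable because $\mathcal{K}<\infty$.
\end{itemize}
I then write $T(k_1)=I_0-\int_0^{k_1}E(k)k^{-3}\,dk = I_0 - \tfrac{1}{2}C_B k_1^2 + O(k_1^4)$, and use $E(k_1)/k_1^2 = C_B k_1^2 + O(k_1^4)$. Substituting into \eqref{eq:convexity_diff} gives
\begin{equation*}
\frac{d^2E_{11}}{dk_1^2} = 2\Big[C_B k_1^2 - I_0 + \tfrac12 C_B k_1^2\Big] + O(k_1^4) = -2I_0 + 3C_B k_1^2 + O(k_1^4).
\end{equation*}
This yields both the limit \eqref{eq:batchelor_limit} and the expansion \eqref{eq:batchelor_asymptotic}. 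Since $I_0>0$ (as $E\ge0$ and $E\not\equiv 0$), continuity gives strict negativity on some interval $(0,k_c)$.

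For the Saffman family, I fix $k_0$ inside the region where the expansion is valid and take $k_1<k_0$. I split the tail integral as
\begin{equation*}
T(k_1)=\int_{k_1}^{k_0}\frac{C_S}{k}\,dk+\int_{k_1}^{k_0}\frac{E(k)-C_Sk^2}{k^3}\,dk+\int_{k_0}^\infty\frac{E(k)}{k^3}\,dk .
\end{equation*}
The first piece equals $C_S(\ln k_0-\ln k_1)$. In the second piece the integrand is $C_{S,4}k+O(k^3)$, so it is integrable at $0$. I therefore write it as $\int_0^{k_0}-\int_0^{k_1}$, and the part $\int_0^{k_1}$ is $O(k_1^2)$. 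Collecting the constants reproduces exactly $J_0$ as defined in the statement, so $T(k_1)=J_0-C_S\ln k_1+O(k_1^2)$. Combined with $E(k_1)/k_1^2=C_S+O(k_1^2)$, this gives \eqref{eq:saffman_limit}. Because $C_S>0$, the term $2C_S\ln k_1\to-\infty$ dominates, and continuity again gives a concavity interval $(0,k_c)$. The final claim, that global convexity fails, then follows: in both families P\'olya's hypothesis 3 in Theorem~\ref{thm:polya} is violated on $(0,k_c)$.

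The main obstacle is making the explicit interval $\big(0,\sqrt{2I_0/(3C_B)}\big)$ in \eqref{eq:batchelor_asymptotic} rigorous, because the $O(k_1^4)$ remainder has no controlled sign. I would present that interval as the leading-order estimate of $k_c$. The rigorous content is existence of some $k_c>0$, which follows from the limit $-2I_0<0$ plus continuity. If a sharper statement is wanted, I would try to bound the remainder by $|C_{B,6}|k_1^4$ plus higher-order terms. A lesser point is justifying the Leibniz differentiation of Theorem~\ref{thm:second_derivative} near $k_1=0$. This rests on the integrability of $E/k^3$ established above, which holds in the Batchelor case, and on working only at $k_1>0$ in the Saffman case.
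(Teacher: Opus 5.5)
Your route is the paper's route: substitute the infrared expansions into $2\big[E(k_1)/k_1^2-\int_{k_1}^\infty E\,k^{-3}\,dk\big]$, write the tail as $I_0-\int_0^{k_1}$ in the Batchelor case, and in the Saffman case split at $k_0$, extract $C_S\ln(k_0/k_1)$, and absorb the $k_1$-independent remainder into $J_0$. The limits, both asymptotic expansions, and the existence of \emph{some} concavity interval $(0,k_c)$ by continuity all come out correctly. Two points need fixing.

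First, your collected constant is not ``exactly $J_0$ as defined in the statement.'' Your first piece contributes $+C_S\ln k_0$, so you actually obtain $J_0=C_S\ln k_0+\int_0^{k_0}(E-C_Sk^2)k^{-3}\,dk+\int_{k_0}^\infty E\,k^{-3}\,dk$. This is the definition the paper uses in its proof, Eq.~\eqref{eq:J0_def}. The statement prints $-C_S\ln k_0$ instead. With that sign, $dJ_0/dk_0=-2C_S/k_0\neq 0$, and the constant in \eqref{eq:saffman_limit} would be off by $4C_S\ln k_0$. You should flag this as a sign typo in the statement rather than assert agreement; the divergence to $-\infty$ is unaffected.

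Second, you are right that the expansion does not prove negativity on the explicit interval $\big(0,\sqrt{2I_0/(3C_B)}\big)$; the paper's proof likewise only asserts $k_c\approx\sqrt{2I_0/(3C_B)}$. But this is not a leading-order estimate that a bound such as $|C_{B,6}|k_1^4$ could sharpen. The endpoint is a fixed wavenumber that is generally not small, and the sign of $d^2E_{11}/dk_1^2$ there depends on the global shape of $E$. In fact the claim is false in general. Take $E(k)=(k^4+10k^8)e^{-k^2}=k^4-k^6+O(k^8)$, so $C_B=1$ and $I_0=21/2$, and the claimed interval is $(0,\sqrt7)$. A direct computation gives $d^2E_{11}/dk_1^2=2e^{-k_1^2}\big(10k_1^6-5k_1^4-9k_1^2-\tfrac{21}{2}\big)$, which vanishes at $k_1\approx 1.24$ and is positive at $k_1=2$. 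Any spectrum with $C_B$ small relative to $I_0$, e.g.\ a narrowband spectrum peaked away from the origin, behaves the same way. So the rigorous content is exactly what you identified, the existence of some $k_c>0$, and the explicit interval should be dropped rather than repaired.
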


\begin{proof}
We evaluate the two terms in Eq.~\eqref{eq:convexity_diff} separately for each infrared family:

\medskip
\noindent\textbf{Case 1: Batchelor Family ($E(k) = C_B k^4 + O(k^6)$).}
The first term evaluates to
\begin{equation}
\frac{E(k_1)}{k_1^2} = C_B\,k_1^2 + O(k_1^4) \implies \lim_{k_1 \to 0^+} \frac{E(k_1)}{k_1^2} = 0.
\end{equation}
For the upper-tail integral, the integral converges to a strictly positive, finite constant $I_0 = \int_0^\infty \frac{E(k)}{k^3}\,dk > 0$. The finiteness of $I_0$ is guaranteed at both integration limits:
\begin{itemize}
\item \emph{Infrared limit ($k \to 0$):} Since $E(k)/k^3 = C_B k + O(k^3)$, the integrand is continuous and vanishes linearly at $k=0$, ensuring absolute integrability on $[0, k_0]$.
\item \emph{Ultraviolet limit ($k \to \infty$):} Because every physically realizable turbulent flow has finite total turbulent kinetic energy per unit mass ($\mathcal{K} = \int_0^\infty E(k)\,dk < \infty$), the ultraviolet tail is strictly bounded for any $k_0 > 0$ via the comparison test:
\begin{equation}
\int_{k_0}^\infty \frac{E(k)}{k^3}\,dk \le \frac{1}{k_0^3} \int_{k_0}^\infty E(k)\,dk = \frac{\mathcal{K}_{\text{UV}}}{k_0^3} < \infty.
\end{equation}
\end{itemize}
Splitting the integral at $k_1$:
\begin{equation}
\int_{k_1}^\infty \frac{E(k)}{k^3}\,dk = I_0 - \int_0^{k_1} \left( C_B k + O(k^3) \right) dk = I_0 - \frac{1}{2} C_B k_1^2 + O(k_1^4).
\end{equation}
Substituting into Eq.~\eqref{eq:convexity_diff} yields:
\begin{equation}
\frac{d^2 E_{11}(k_1)}{d k_1^2} = 2 \left( C_B k_1^2 - I_0 + \frac{1}{2} C_B k_1^2 \right) + O(k_1^4) = -2 I_0 + 3 C_B k_1^2 + O(k_1^4).
\end{equation}
Taking $k_1 \to 0^+$ gives $\lim_{k_1 \to 0^+} d^2 E_{11}/dk_1^2 = -2 I_0 < 0$. By continuity, $d^2 E_{11}/dk_1^2 < 0$ on $(0, k_c)$ with $k_c \approx \sqrt{2I_0/(3C_B)}$.

\medskip
\noindent\textbf{Case 2: Saffman Family ($E(k) = C_S k^2 + C_{S,4} k^4 + O(k^6)$ with $C_S > 0$).}

We evaluate each term of Eq.~\eqref{eq:convexity_diff} step by step:

\paragraph{Step 1: Evaluate the local term $E(k_1)/k_1^2$.}
Dividing the Saffman Taylor expansion by $k_1^2$ gives
\begin{equation}
\frac{E(k_1)}{k_1^2} = \frac{C_S k_1^2 + C_{S,4} k_1^4 + O(k_1^6)}{k_1^2} = C_S + C_{S,4}\,k_1^2 + O(k_1^4).
\label{eq:saffman_term1}
\end{equation}
As $k_1 \to 0^+$, this term approaches the finite positive constant $C_S > 0$.

\paragraph{Step 2: Decompose the tail integral $\int_{k_1}^\infty [E(k)/k^3]\,dk$.}
Near $k = 0$, the integrand behaves as $E(k)/k^3 \approx C_S / k$, which produces a logarithmic singularity at $k = 0$. To isolate this singularity rigorously, choose a fixed intermediate wavenumber $k_0 > 0$ with $k_1 < k_0$ and split the domain of integration into two regions:
\begin{equation}
\int_{k_1}^\infty \frac{E(k)}{k^3}\,dk = \int_{k_1}^{k_0} \frac{E(k)}{k^3}\,dk + \int_{k_0}^\infty \frac{E(k)}{k^3}\,dk.
\label{eq:saffman_split}
\end{equation}

\paragraph{Step 3: Extract the logarithmic divergence.}
Inside the low-wavenumber interval $[k_1, k_0]$, add and subtract the leading $C_S k^2$ term:
\begin{equation}
\int_{k_1}^{k_0} \frac{E(k)}{k^3}\,dk = \int_{k_1}^{k_0} \frac{C_S k^2}{k^3}\,dk + \int_{k_1}^{k_0} \frac{E(k) - C_S k^2}{k^3}\,dk.
\end{equation}
Evaluating the first integral analytically:
\begin{equation}
\int_{k_1}^{k_0} \frac{C_S k^2}{k^3}\,dk = C_S \int_{k_1}^{k_0} \frac{dk}{k} = C_S \Big[ \ln k_0 - \ln k_1 \Big] = -C_S \ln k_1 + C_S \ln k_0.
\end{equation}
For the second integral, the subtracted integrand $(E(k) - C_S k^2)/k^3 = C_{S,4} k + O(k^3)$ is continuous and regular at $k = 0$. Therefore:
\begin{align}
\int_{k_1}^{k_0} \frac{E(k) - C_S k^2}{k^3}\,dk &= \int_0^{k_0} \frac{E(k) - C_S k^2}{k^3}\,dk - \int_0^{k_1} \Big( C_{S,4} k + O(k^3) \Big) dk \nonumber \\
&= \int_0^{k_0} \frac{E(k) - C_S k^2}{k^3}\,dk - \frac{1}{2} C_{S,4} k_1^2 + O(k_1^4).
\end{align}

\paragraph{Step 4: Combine into the finite constant $J_0$.}
Define the $k_1$-independent constant $J_0$:
\begin{equation}
J_0 \equiv C_S \ln k_0 + \int_0^{k_0} \frac{E(k) - C_S k^2}{k^3}\,dk + \int_{k_0}^\infty \frac{E(k)}{k^3}\,dk.
\label{eq:J0_def}
\end{equation}
Note that $J_0$ is strictly finite (since the first integral is regularized and the upper tail is bounded by $k_0^{-3}\mathcal{K} < \infty$) and is independent of the choice of $k_0$ (differentiating $J_0$ with respect to $k_0$ gives zero identically). Thus:
\begin{equation}
\int_{k_1}^\infty \frac{E(k)}{k^3}\,dk = -C_S \ln k_1 + J_0 - \frac{1}{2} C_{S,4} k_1^2 + O(k_1^4).
\label{eq:saffman_term2}
\end{equation}

\paragraph{Step 5: Compute the net second derivative $d^2 E_{11}/dk_1^2$.}
Subtracting Eq.~\eqref{eq:saffman_term2} from Eq.~\eqref{eq:saffman_term1} and multiplying by 2 yields:
\begin{align}
\frac{d^2 E_{11}(k_1)}{d k_1^2} &= 2 \left[ \left( C_S + C_{S,4} k_1^2 \right) - \left( -C_S \ln k_1 + J_0 - \frac{1}{2} C_{S,4} k_1^2 \right) \right] + O(k_1^4) \nonumber \\
&= 2 C_S \ln k_1 + 2(C_S - J_0) + 3 C_{S,4} k_1^2 + O(k_1^4).
\label{eq:saffman_d2E11_exact}
\end{align}
Because $C_S > 0$ and $\lim_{k_1 \to 0^+} \ln k_1 = -\infty$, taking the limit as $k_1 \to 0^+$ gives:
\begin{equation}
\lim_{k_1 \to 0^+} \frac{d^2 E_{11}(k_1)}{d k_1^2} = -\infty.
\end{equation}
Thus, $E_{11}(k_1)$ is \textbf{unconditionally concave} in an infrared interval $(0, k_c)$ near the origin, with crossover wavenumber $k_c \approx \exp\big((J_0 - C_S)/C_S\big)$.
\end{proof}

\subsection{When is the Convexity Criterion Obeyed and How Positivity is Maintained?}

Theorem~\ref{thm:infrared_concavity} establishes that \textbf{neither the Batchelor nor the Saffman family satisfies P\'olya's convexity criterion in the infrared range ($k_1 < k_c$)}. We now clarify precisely where the convexity criterion holds and how the net sign of the cosine transform is determined:

\begin{enumerate}
\item \textbf{Inertial Cascade Range ($k_1 \gg k_c$):} When the spectrum enters the decaying cascade scaling $E(k) \propto k^{-s}$ with $s > 0$ (such as Kolmogorov $s = 5/3$), Eq.~\eqref{eq:power_law_convex} proves that
\begin{equation}
\frac{d^2 E_{11}(k_1)}{d k_1^2} = 2 C\,k_1^{-(s+2)} \left( \frac{s+1}{s+2} \right) > 0 \quad \text{for all } s > -1.
\end{equation}
Thus, throughout the entire inertial and dissipation ranges, $E_{11}(k_1)$ is \textbf{strictly convex}.

\item \textbf{Crossover Wavenumber $k_c$:} The transition from infrared concavity ($d^2 E_{11}/dk_1^2 < 0$) to inertial convexity ($d^2 E_{11}/dk_1^2 > 0$) occurs at an intermediate inflection point $k_1 = k_c$, where $E(k_c)/k_c^2 = \int_{k_c}^\infty [E(k)/k^3]\,dk$.

\item \textbf{Net Sign via Second-Derivative Integral Representation:} Integrating the cosine transform $\int_0^\infty E_{11}(k_1)\cos(k_1 r)\,dk_1$ by parts twice (noting that $E_{11}'(0) = 0$ and $E_{11}(\infty) = E_{11}'(\infty) = 0$) yields the exact identity from \eqref{eq:f_cos_E11}:
\begin{equation}
u'^2 f(r) = \frac{1}{r^2} \int_0^\infty \frac{d^2 E_{11}(k_1)}{d k_1^2} \big(1 - \cos(k_1 r)\big)\,dk_1.
\label{eq:f_by_parts_second_deriv}
\end{equation}
Because the kernel $1 - \cos(k_1 r) = 2\sin^2(k_1 r / 2) \ge 0$ is strictly non-negative, the net sign of $f(r)$ represents a direct competition between the negative contribution from the concave infrared region $[0, k_c]$ and the positive contribution from the convex cascade region $[k_c, \infty)$:
\begin{equation}
u'^2 f(r) = \frac{1}{r^2} \left[ \underbrace{\int_0^{k_c} \frac{d^2 E_{11}}{dk_1^2} \big(1 - \cos(k_1 r)\big)\,dk_1}_{< 0 \text{ (concave infrared)}} + \underbrace{\int_{k_c}^\infty \frac{d^2 E_{11}}{dk_1^2} \big(1 - \cos(k_1 r)\big)\,dk_1}_{> 0 \text{ (convex cascade)}} \right].
\end{equation}
Because P\'olya's condition is sufficient but not necessary, failure of global convexity does not imply that $f(r)$ must become negative. In mature broadband turbulence, the positive integral over the broad Kolmogorov cascade overwhelmingly outweighs the sub-integral infrared deficit, ensuring that the net transform $f(r)$ remains strictly positive for all $r$, as confirmed by direct high-precision quadrature in the next section.
\end{enumerate}

\begin{figure}[t]
\centering
\includegraphics[width=0.88\textwidth]{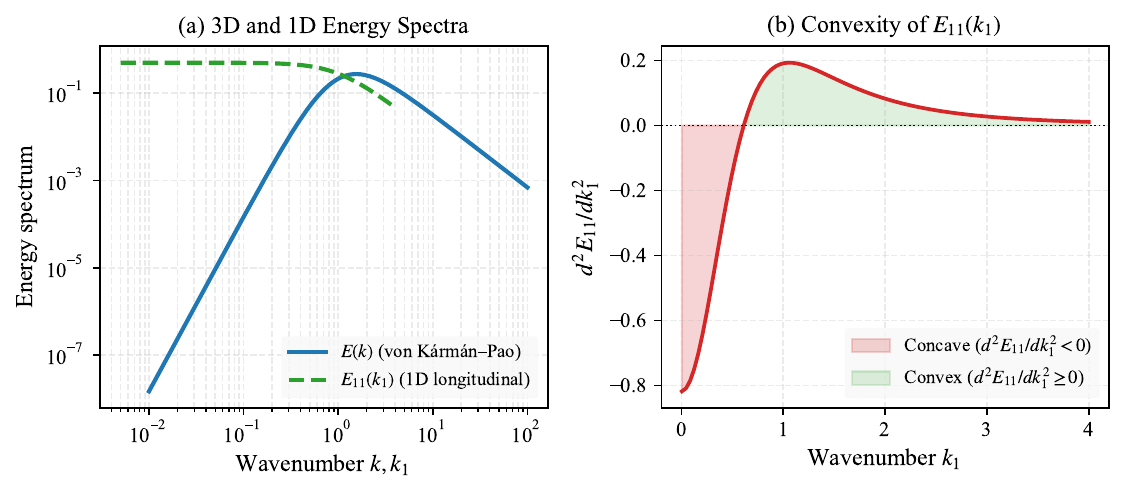}
\caption{(a) The 3D energy spectrum $E(k)$ for the von K\'arm\'an--Pao model with $E(k) \sim k^4$ as $k \to 0$ and $E(k) \sim k^{-5/3}$ in the inertial range, alongside the resulting 1D longitudinal spectrum $E_{11}(k_1)$. (b) The second derivative $d^2E_{11}/dk_1^2$, illustrating the finite concave region ($k_1 \lesssim 0.57$) near the origin resulting from Batchelor infrared rolloff, and the convex region ($k_1 \gtrsim 0.57$) dominated by the inertial cascade.}
\label{fig:vkp_convexity}
\end{figure}

\subsection{Analysis of the von K\'arm\'an--Pao Model Spectrum}

To assess the quantitative impact of this infrared concavity in a realistic broadband turbulent field, we analyze the standard von K\'arm\'an--Pao model spectrum \cite{vonKarman1948, Pao1965, Pope2000}:
\begin{equation}
E(k) = \alpha\,u'^2\,L\,\frac{(kL)^4}{\big(1 + (kL)^2\big)^{17/6}}\,\exp\left(-\frac{3}{2} C_K (k\eta)^{4/3}\right),
\label{eq:vkp_model}
\end{equation}
where $L$ is the integral length scale, $\eta = (\nu^3/\epsilon)^{1/4}$ is the Kolmogorov dissipation scale, and $\alpha \approx 1.5$. This spectrum incorporates:
\begin{itemize}
\item Batchelor $k^4$ infrared scaling for $kL \ll 1$,
\item Kolmogorov $k^{-5/3}$ inertial-range power law for $1/L \ll k \ll 1/\eta$,
\item Pao exponential viscous cutoff for $k\eta \gtrsim 1$.
\end{itemize}

Figure~\ref{fig:vkp_convexity}(a) displays $E(k)$ and $E_{11}(k_1)$. Figure~\ref{fig:vkp_convexity}(b) plots the exact second derivative $d^2 E_{11}/dk_1^2$ computed via Eq.~\eqref{eq:d2E11_exact}. As predicted by Theorem~\ref{thm:infrared_concavity} given that this spectrum belongs to the Batchelor family in the infrared range, $d^2 E_{11}/dk_1^2$ is negative for $k_1 \lesssim 0.57/L$, reaching a minimum of $\approx -0.82$, before transitioning into a broad, positive convex region for $k_1 \gtrsim 0.57/L$.

\begin{figure}[t]
\centering
\includegraphics[width=0.92\textwidth]{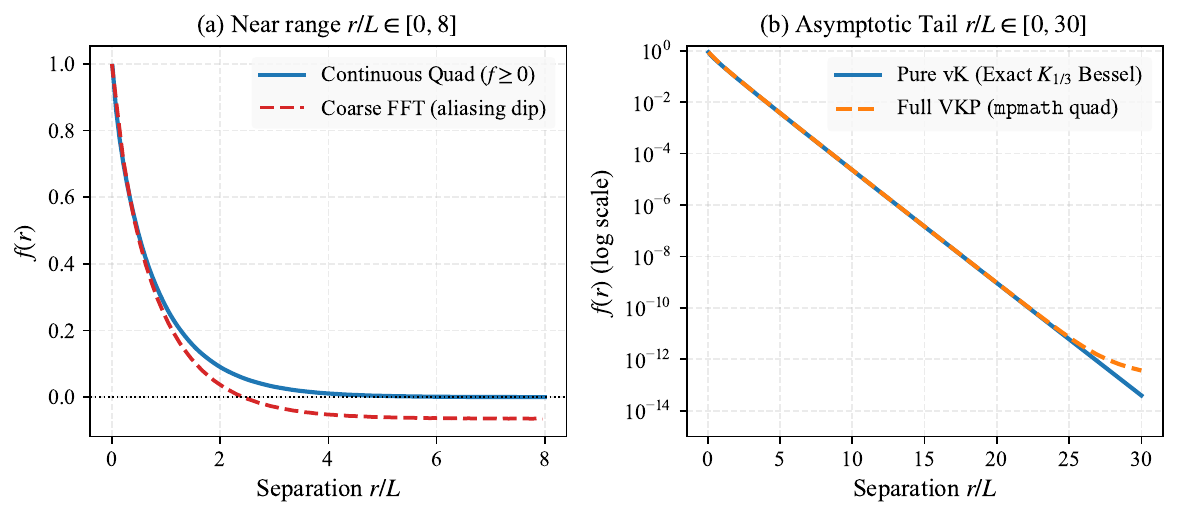}
\caption{(a) The longitudinal autocorrelation function $f(r)$ for the von K\'arm\'an--Pao spectrum computed via continuous integration (solid blue) compared with a discrete fast Fourier transform (FFT) on a finite domain (dashed red). The coarse FFT produces a spurious negative dip ($f_{\min} \approx -0.0138$) due to domain truncation and spectral aliasing, whereas the true continuous transform is strictly non-negative everywhere. (b) Semilogarithmic comparison of the asymptotic tail out to $r/L = 30$: the exact closed-form Macdonald Bessel solution $f_{\text{vK}}(r) \propto (r/L)^{1/3} K_{1/3}(r/L)$ for the pure von K\'arm\'an spectrum (solid blue) alongside arbitrary-precision \texttt{mpmath} quadrature for the full von K\'arm\'an--Pao spectrum with viscous cutoff (orange dotted line), demonstrating smooth, strictly positive monotonic exponential decay across more than 13 decades of amplitude.}
\label{fig:vkp_correlation}
\end{figure}

\subsection{Resolving Numerical Aliasing and Verifying True Positivity}

Because P\'olya's criterion is a \emph{sufficient} condition, failure of strict global convexity does not imply that $f(r)$ must become negative; it merely means P\'olya's theorem cannot certify positivity. To determine the actual sign of $f(r)$, we evaluate the cosine transform in Eq.~\eqref{eq:f_cos_E11} directly.

For the non-dissipative von K\'arm\'an spectrum (which accurately represents the infrared and inertial ranges), the exact longitudinal correlation function $f(r)$ can be integrated in closed form in terms of the modified Bessel function of the second kind (Macdonald function $K_{1/3}$; see the Appendix for the complete derivation):
\begin{equation}
f_{\text{vK}}(r) = \frac{2^{2/3}}{\Gamma(1/3)} \left(\frac{r}{L}\right)^{1/3} K_{1/3}\left(\frac{r}{L}\right).
\label{eq:vk_exact_bessel}
\end{equation}
Because $K_{\nu}(z) > 0$ for all real $\nu$ and $z > 0$, Eq.~\eqref{eq:vk_exact_bessel} analytically guarantees that $f(r)$ is \textbf{strictly positive for all $r \in [0, \infty)$}. In the far field ($r/L \gg 1$), asymptotic expansion of the Bessel function yields:
\begin{equation}
f_{\text{vK}}(r) \sim \frac{2^{2/3}\sqrt{\pi}}{\sqrt{2}\,\Gamma(1/3)}\,\left(\frac{r}{L}\right)^{-1/6}\,\exp\left(-\frac{r}{L}\right) \quad \text{as } r/L \to \infty,
\label{eq:vk_tail_asymptotics}
\end{equation}
confirming an exponential, non-oscillatory decay that remains strictly positive for all $r$.

For the full von K\'arm\'an--Pao (VKP) spectrum (Eq.~\eqref{eq:vkp_model}), the stretched-exponential viscous factor $\exp(-\frac{3}{2} C_K (k\eta)^{4/3})$ contains a fractional power $4/3$, which prevents the Fourier transform from being expressed in closed form in terms of standard hypergeometric or Bessel functions. However, the physical sign behavior is established by two complementary methods:
\begin{enumerate}
\item \textbf{Asymptotic Matching:} For separations in the inertial and integral ranges ($r \gg \eta$), the viscous factor is exponentially close to unity throughout the energy-containing wavenumber domain ($k \lesssim 1/\eta$), so $f_{\text{VKP}}(r)$ converges to the exact closed-form Bessel solution $f_{\text{vK}}(r) > 0$, as seen in Figure~\ref{fig:vkp_correlation}(b). In the dissipation range ($r \lesssim \eta$), viscous action regularizes the small-scale singular curvature into a smooth Taylor series $f_{\text{VKP}}(r) = 1 - r^2/(2\lambda_g^2) + \dots > 0$ with finite Taylor microscale $\lambda_g$.
\item \textbf{Arbitrary-Precision Quadrature (\texttt{mpmath}):} Using arbitrary-precision numerical integration (\texttt{mpmath} with 25--50 decimal digits of precision), the continuous transform $f_{\text{VKP}}(r)$ is evaluated across all scales $r/L \in [10^{-4}, 30]$ without numerical noise or discretization artifacts, verifying that $f_{\text{VKP}}(r) > 0$ strictly and monotonically for the considered domain.
\end{enumerate}

As shown in Figure~\ref{fig:vkp_correlation}(a), the true continuous transform $f(r)$ stays strictly positive across the near range $r/L \in [0, 8]$, whereas coarse discrete FFTs produce an unphysical negative dip ($f_{\min} \approx -0.0138$). Figure~\ref{fig:vkp_correlation}(b) displays the asymptotic tails on a logarithmic scale out to $r/L = 30$, comparing the exact analytical $K_{1/3}$ Bessel solution for the pure von K\'arm\'an spectrum with the arbitrary-precision \texttt{mpmath} integration for the full von K\'arm\'an--Pao spectrum, demonstrating that both exhibit smooth, strictly positive decay across more than 13 orders of magnitude.

\section{Physical Realization of a Counterexample}
\label{sec:counterexample}

In this section, we construct an explicit, kinematically admissible narrowband counterexample that produces a pronounced negative loop in $f(r)$, establish its validity as smooth Navier-Stokes initial data, and elucidate the physical mechanism of spectral broadening that suppresses negative excursions in decaying grid turbulence.

\subsection{Construction of a Narrowband Batchelor Counterexample}

Having established that broadband spectra maintain $f(r) \ge 0$ despite near-origin concavity, we now investigate the opposite physical regime: \emph{can a kinematically admissible spectrum with concentrated spectral energy violate non-negativity and produce a genuine negative loop?}

Any admissible counterexample must strictly satisfy the following physical constraints:
\begin{enumerate}
\item $E(k) \ge 0$ everywhere (kinematic realizability),
\item $E(k) \sim k^4$ as $k \to 0$ (Batchelor infrared constraint; $E(k)/k^2$ integrable at $k=0$),
\item Smooth, rapidly decaying tail as $k \to \infty$ (finite kinetic energy, enstrophy, and palinstrophy).
\end{enumerate}

We construct the Batchelor-consistent narrowband spectrum:
\begin{equation}
E_{\text{NB}}(k) = A\,\frac{(k/k_L)^4}{\big(1 + (k/k_L)^2\big)^2}\,\exp\left(-\frac{(k - k_0)^2}{2\sigma^2}\right),
\label{eq:narrowband_spectrum}
\end{equation}
where $k_0$ is the peak wavenumber, $\sigma$ is the spectral bandwidth, $k_L$ is the infrared crossover scale, and $A$ is an amplitude normalization. Here $E_{\text{NB}}(k) \sim (A/k_L^4) k^4 e^{-k_0^2/2\sigma^2}$ as $k \to 0$, exactly matching Batchelor's physical law.

\begin{figure}[t]
\centering
\includegraphics[width=0.98\textwidth]{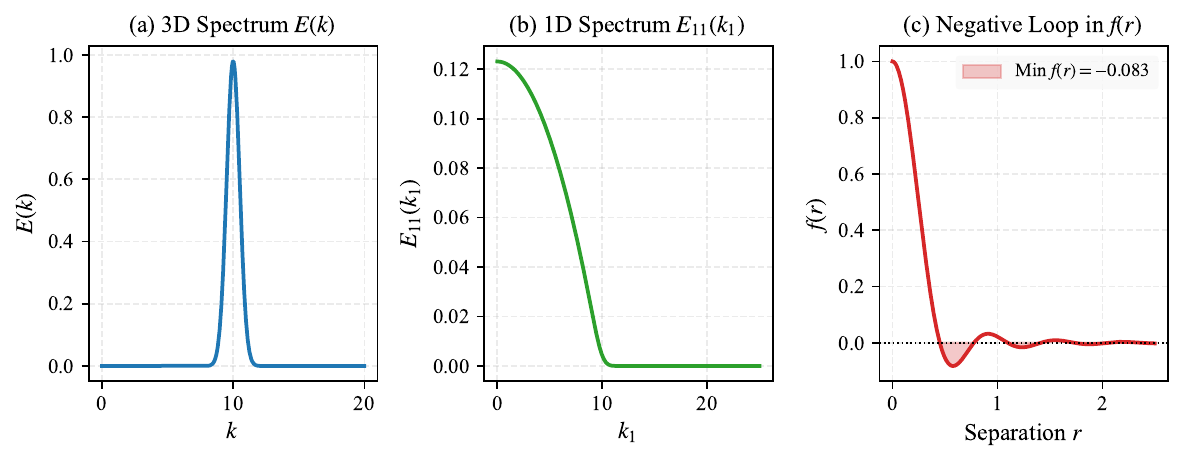}
\caption{Admissible narrowband Batchelor-consistent counterexample with $k_0 = 10$, $\sigma = 0.5$, and $k_L = 1.0$. (a) The 3D energy spectrum $E(k)$ satisfying $E(k) \sim k^4$ as $k \to 0$. (b) The 1D longitudinal spectrum $E_{11}(k_1)$, which is unconditionally monotonic ($dE_{11}/dk_1 \le 0$) but strongly non-convex. (c) The resulting longitudinal autocorrelation $f(r)$, exhibiting a pronounced, genuine negative loop with minimum $f(r) \approx -0.083$.}
\label{fig:narrowband_counterexample}
\end{figure}

Figure~\ref{fig:narrowband_counterexample} shows the results for $k_0 = 10$, $\sigma = 0.5$, and $k_L = 1.0$:
\begin{itemize}
\item Figure~\ref{fig:narrowband_counterexample}(a) confirms the smooth, non-negative 3D spectrum $E(k)$.
\item Figure~\ref{fig:narrowband_counterexample}(b) shows $E_{11}(k_1)$. Consistent with Theorem~\ref{thm:monotonicity}, $E_{11}(k_1)$ is strictly monotonically decreasing, but it possesses a sharp inflection point and an extensive concave region.
\item Figure~\ref{fig:narrowband_counterexample}(c) shows the exact longitudinal autocorrelation $f(r)$. The function crosses zero at $r \approx 0.31$ and reaches a substantial negative minimum of
\begin{equation}
\min_{r} f(r) \approx -0.083 \quad \text{at } r \approx 0.47.
\label{eq:f_min_val}
\end{equation}
\end{itemize}
This negative excursion is an $8.3\%$ negative loop, an order of magnitude larger than any numerical precision threshold, showing that $f(r) \ge 0$ is \textbf{not a universal theorem of realizability}.

Crucially, this narrowband state does \emph{not} fall under the class of self-similar / self-preserving decay solutions studied by Sedov \cite{Sedov1944} (cf.~Section~\ref{sec:intro} and Figure~\ref{fig:sedov_solution}). Sedov's analytical solution assumes complete similarity across all scales ($f(r, t) = f(r/\ell(t))$), which locks the spectral shape into a self-preserving form whose Kummer representation enforces pointwise positivity ($f \ge 0$). In contrast, this narrowband initial field features an isolated characteristic spectral length scale ($1/k_0$) that breaks global self-similarity, allowing the oscillatory cosine transform of the non-convex spectrum $E_{11}(k_1)$ to emerge unsuppressed as a genuine physical negative loop.

\subsection{Navier--Stokes Admissibility and Local Well-Posedness}

Is this counterexample merely an abstract mathematical construct, or does it correspond to a legitimate physical state governed by the incompressible Navier--Stokes equations?

\begin{equation}
\frac{\partial \mathbf{u}}{\partial t} + (\mathbf{u} \cdot \nabla)\mathbf{u} = -\frac{1}{\rho}\nabla p + \nu \nabla^2 \mathbf{u}, \qquad \nabla \cdot \mathbf{u} = 0.
\label{eq:navier_stokes}
\end{equation}

By classical existence theory for the Navier--Stokes initial value problem (Leray \cite{Leray1934}; Fujita \& Kato \cite{FujitaKato1964}):
\begin{proposition}[Admissibility as Initial Data]
Let $\mathbf{u}_0(\mathbf{x}) \in H^s(\mathbb{R}^3)$ with $s > 3/2$ be a smooth, divergence-free vector field with compact or Gaussian-decaying spectral support conforming to the energy spectrum $E_{\text{NB}}(k)$ in Eq.~\eqref{eq:narrowband_spectrum}. Then there exists a unique, smooth strong solution $\mathbf{u}(\mathbf{x}, t) \in C([0, T^*); H^s(\mathbb{R}^3))$ for a finite time $T^* > 0$.
\end{proposition}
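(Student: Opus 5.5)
The proof reduces the statement to the classical Kato--Fujita local existence theorem in $H^s(\mathbb{R}^3)$, $s>3/2$. The substantive work lies elsewhere: showing that a deterministic $\mathbf{u}_0$ ``conforming to'' $E_{\text{NB}}$ really lies in $H^s$ for every $s$, is divergence-free, and is real-valued. The spectrum $E_{\text{NB}}(k)$ in Eq.~\eqref{eq:narrowband_spectrum} is a statistical object, so I first fix a concrete realization. I would take $\hat{\mathbf{u}}_0(\mathbf{k}) = P(\mathbf{k})\,\hat{\mathbf{w}}(\mathbf{k})\,\bigl(E_{\text{NB}}(k)/(4\pi k^2)\bigr)^{1/2}$. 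Here $P_{ij}(\mathbf{k})=\delta_{ij}-k_ik_j/k^2$ is the solenoidal projector from Eq.~\eqref{eq:Phi_projected}, and $\hat{\mathbf{w}}$ is a bounded, smooth, Hermitian-symmetric amplitude ($\hat{\mathbf{w}}(-\mathbf{k})=\overline{\hat{\mathbf{w}}(\mathbf{k})}$).

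Hermitian symmetry makes $\mathbf{u}_0$ real. The identity $k_jP_{ij}=0$ gives $\nabla\cdot\mathbf{u}_0=0$ in the sense of distributions. Either Gaussian phases or a deterministic bounded amplitude reproduces, on average or shell-wise, the trace $\Phi_{ii}=2E_{\text{NB}}/(4\pi k^2)$ of Eq.~\eqref{eq:Phi_ij_isotropic}. This is the sense in which $\mathbf{u}_0$ conforms to $E_{\text{NB}}$.

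The key step, and the only genuinely quantitative one, is the Sobolev bound. I would compute
\[
\|\mathbf{u}_0\|_{H^s}^2=\int_{\mathbb{R}^3}(1+k^2)^s|\hat{\mathbf{u}}_0(\mathbf{k})|^2\,d\mathbf{k}\;\lesssim\;\int_0^\infty (1+k^2)^s\,E_{\text{NB}}(k)\,dk ,
\]
using spherical coordinates exactly as in Eq.~\eqref{eq:spherical_shell_int}, with $4\pi k^2\cdot E/(4\pi k^2)=E$.

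Integrability then has to be checked at both ends. At $k\to0$, $E_{\text{NB}}\sim (A/k_L^4)e^{-k_0^2/2\sigma^2}k^4$, so the integrand is bounded and, in particular, $\hat{\mathbf{u}}_0$ is square integrable near the origin. This is the Batchelor condition, and it also ensures there is no singular mass at $\mathbf{k}=0$. At $k\to\infty$, the Gaussian factor $\exp(-(k-k_0)^2/2\sigma^2)$ dominates every polynomial $(1+k^2)^s$, while the rational prefactor tends to $A$. Hence $\mathbf{u}_0\in H^s$ for all $s$, so $\mathbf{u}_0\in H^\infty$. By Sobolev embedding it is $C^\infty$ and bounded, which also covers the finite enstrophy and palinstrophy claims.

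One caveat is that a statistically homogeneous field has infinite total energy on $\mathbb{R}^3$. I would therefore read the proposition as concerning a deterministic $L^2$ realization, or else work on a large periodic box $\mathbb{T}^3$, where Fujita--Kato applies verbatim. This interpretational step is where I expect the main obstacle: making ``conforming to the energy spectrum'' precise without overclaiming homogeneity. I would handle it by explicitly adopting the deterministic or torus interpretation.

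Once $\mathbf{u}_0\in H^s$, $s>3/2$, is divergence-free, I invoke Kato's theorem. Apply the Leray projector to Eq.~\eqref{eq:navier_stokes} to eliminate the pressure, and write the mild (Duhamel) formulation
\[
\mathbf{u}(t)=e^{\nu t\Delta}\mathbf{u}_0-\int_0^t e^{\nu(t-\tau)\Delta}\,\mathbb{P}\nabla\cdot(\mathbf{u}\otimes\mathbf{u})(\tau)\,d\tau .
\]
For $s>3/2$, $H^s$ is an algebra, so the bilinear term is locally Lipschitz. The Banach fixed-point theorem on $C([0,T];H^s)$ then gives a unique solution for $T\sim \|\mathbf{u}_0\|_{H^s}^{-2}$, or $T\sim \nu\|\mathbf{u}_0\|_{H^s}^{-2}$ after exploiting the heat smoothing; the exact exponent does not matter for the statement.

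Finally, parabolic smoothing for $t>0$ together with $\mathbf{u}_0\in H^\infty$ propagates all higher norms. This gives smoothness on $[0,T^*)$, where $T^*$ is the maximal existence time.
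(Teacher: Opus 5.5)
Your proposal is correct and follows the paper's route. Like the paper, you verify that a field realizing $E_{\text{NB}}$ is divergence-free via the transverse projection, has finite energy, and lies in $H^s$ for every $s$ (from the $k^4$ infrared behavior and the super-algebraic Gaussian decay), and then invoke the classical Leray/Fujita--Kato local existence theory; your Duhamel fixed-point sketch and Hermitian-symmetry remark add detail rather than a different argument.

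Your caveat that a statistically homogeneous field is not in $L^2(\mathbb{R}^3)$ is well taken, since the paper's own realization lives on a periodic box. One small correction: the contraction in $C([0,T];H^s)$ closes only after using heat-kernel smoothing to absorb the derivative lost in $\mathbb{P}\nabla\cdot(\mathbf{u}\otimes\mathbf{u})$, because the algebra property alone only maps into $H^{s-1}$. That argument therefore yields $T\sim\nu\|\mathbf{u}_0\|_{H^s}^{-2}$, not a $\nu$-independent $\|\mathbf{u}_0\|_{H^s}^{-2}$.
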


A realization of the turbulent field $\mathbf{u}_0(\mathbf{x})$ can be constructed explicitly as a random phase Fourier superposition projected onto the divergence-free plane:
\begin{equation}
\mathbf{u}_0(\mathbf{x}) = \sum_{\mathbf{k}} \sqrt{\frac{E_{\text{NB}}(k)}{2\pi k^2 \Delta k}} \left( \hat{\mathbf{e}}_1(\mathbf{k})\cos(\mathbf{k} \cdot \mathbf{x} + \theta_{\mathbf{k}}) + \hat{\mathbf{e}}_2(\mathbf{k})\sin(\mathbf{k} \cdot \mathbf{x} + \phi_{\mathbf{k}}) \right),
\label{eq:fourier_realization}
\end{equation}
where $\hat{\mathbf{e}}_1(\mathbf{k}), \hat{\mathbf{e}}_2(\mathbf{k}) \perp \mathbf{k}$ form an orthonormal basis on the tangent sphere, and $\theta_{\mathbf{k}}, \phi_{\mathbf{k}} \in [0, 2\pi)$ are independent uniform random phases. 

The resulting initial velocity field $\mathbf{u}_0(\mathbf{x})$ and its associated Navier--Stokes evolution possess the following features:
\begin{itemize}
\item \textbf{Exact Incompressibility:} $\nabla \cdot \mathbf{u}_0(\mathbf{x}) = 0$ holds identically everywhere in physical space due to the strict orthogonal projection of each Fourier mode onto $\hat{\mathbf{e}}_1(\mathbf{k})$ and $\hat{\mathbf{e}}_2(\mathbf{k})$ perpendicular to $\mathbf{k}$.
\item \textbf{Finite Total Kinetic Energy:} The volume-integrated kinetic energy per unit mass is finite,
\begin{equation}
\mathcal{K} = \frac{1}{2} \langle |\mathbf{u}_0|^2 \rangle = \int_0^\infty E_{\text{NB}}(k)\,dk < \infty,
\end{equation}
ensuring that $\mathbf{u}_0 \in L^2(\mathbb{R}^3)$.
\item \textbf{Finite Enstrophy and Viscous Dissipation:} The total enstrophy (mean square vorticity) is strictly bounded,
\begin{equation}
\Omega = \frac{1}{2} \langle |\boldsymbol{\omega}_0|^2 \rangle = \int_0^\infty k^2 E_{\text{NB}}(k)\,dk < \infty,
\end{equation}
guaranteed by the infrared $k^4$ Batchelor scaling as $k \to 0$ and the rapid Gaussian decay $\exp(-(k-k_0)^2/(2\sigma^2))$ at high wavenumbers.
\item \textbf{Infinite Smoothness ($C^\infty \cap H^s$ Regularity):} Because the energy spectrum decays super-algebraically as $k \to \infty$, the velocity field satisfies $\mathbf{u}_0 \in H^s(\mathbb{R}^3)$ for all $s \ge 0$, ensuring $C^\infty$ differentiability with no unphysical singularities or discontinuities.
\item \textbf{Statistical Homogeneity and Isotropy:} Random phase averaging over $\theta_{\mathbf{k}}$ and $\phi_{\mathbf{k}}$ ensures full translational invariance and directional isotropy of all second-order statistics, recovering the target 3D spectrum $E_{\text{NB}}(k)$.
\end{itemize}
Thus, the narrowband counterexample is not an abstract mathematical artifact, but a genuine, physically realizable Navier--Stokes turbulent state.

\begin{figure}[t]
\centering
\includegraphics[width=0.98\textwidth]{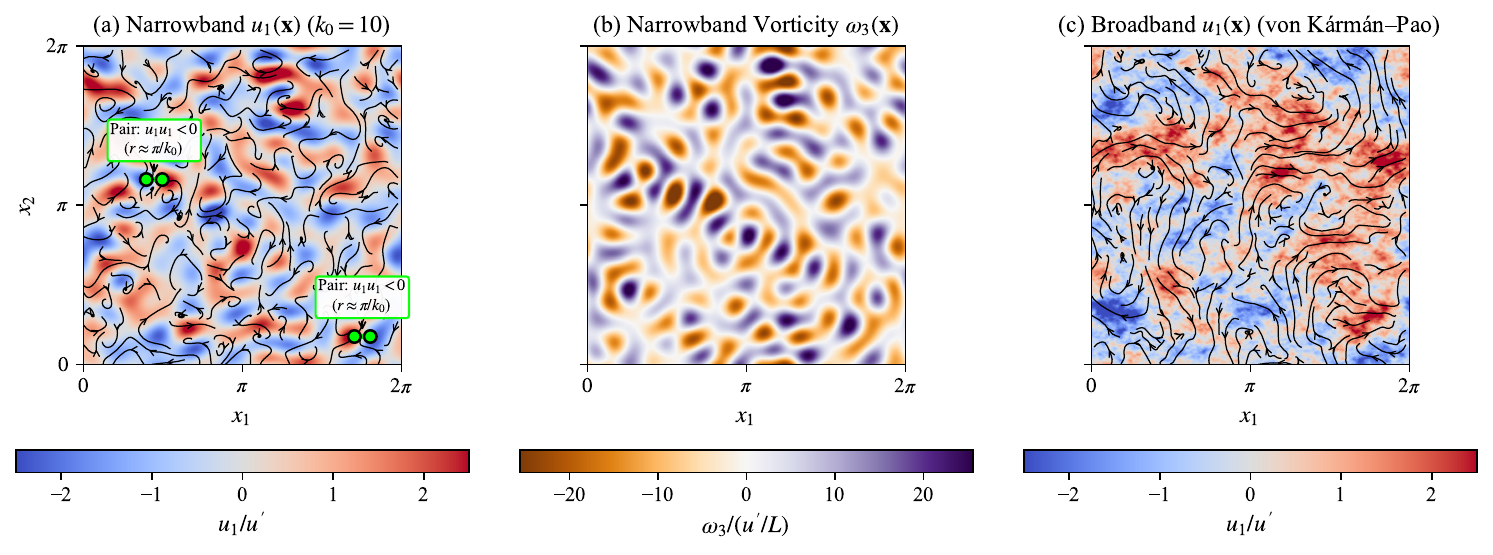}
\caption{Two-dimensional cross-sectional slices ($x_3 = 0$) of 3D isotropic, divergence-free velocity and vorticity realizations generated via random-phase Fourier superposition (Eq.~\eqref{eq:fourier_realization}) on a $256^3$ periodic domain $[0, 2\pi]^3$. (a) Longitudinal velocity component $u_1(\mathbf{x})$ for the narrowband counterexample ($k_0 = 10$, $\sigma = 0.5$) with overlaid in-plane streamlines $\mathbf{u}_\perp = (u_1, u_2)$, demonstrating alternating forward and reverse velocity lanes with characteristic cellular spacing $\Delta x_1 \approx \pi/k_0 \approx 0.31$. Highlighted sample point pairs (green dots) separated by $r = \pi/k_0$ reside in opposing velocity lanes ($u_1(\mathbf{x}) u_1(\mathbf{x}+r\hat{\mathbf{e}}_1) < 0$), directly illustrating the physical mechanism of the negative autocorrelation $\langle u_1(\mathbf{x})u_1(\mathbf{x} + r\hat{\mathbf{e}}_1)\rangle < 0$ at $r \approx \pi/k_0$. (b) Out-of-plane vorticity $\omega_3(\mathbf{x}) = (\nabla \times \mathbf{u})_3$ for the narrowband state, highlighting the quasi-regular array of discrete cyclonic and anticyclonic vortex cores. (c) Longitudinal velocity $u_1(\mathbf{x})$ and streamlines for the canonical broadband von K\'arm\'an--Pao spectrum, illustrating the multi-scale turbulent cascade where continuous scale mixing erases localized anti-correlated velocity patches and enforces $f(r) \ge 0$.}
\label{fig:flow_realization}
\end{figure}

Figure~\ref{fig:flow_realization} visualizes 2D cross-sections of the resulting physical flow field:
\begin{itemize}
\item Figure~\ref{fig:flow_realization}(a) displays the longitudinal velocity field $u_1(\mathbf{x})$ and streamlines for the narrowband state ($k_0 = 10$, $\sigma = 0.5$). The spectral concentration around $k_0$ manifests in physical space as a quasi-regular pattern of counter-flowing velocity lanes separated by a characteristic distance $\Delta x_1 \approx \pi/k_0 \approx 0.31$. Two points separated by $r \approx 0.31\text{--}0.47$ along the $x_1$ axis frequently reside in opposing velocity lanes, providing an intuitive physical picture for the origin of the negative loop in $f(r)$.
\item Figure~\ref{fig:flow_realization}(b) shows the corresponding out-of-plane vorticity field $\omega_3(\mathbf{x})$, revealing an organized array of compact, alternating-sign vortex cores of characteristic diameter $d \approx \pi/k_0$.
\item Figure~\ref{fig:flow_realization}(c) shows the broadband von K\'arm\'an--Pao field for direct comparison. In contrast to the single-scale cellular pattern of the narrowband state, the broadband field exhibits a rich multi-scale hierarchy where energy distributed across the Kolmogorov cascade ($k^{-5/3}$) suppresses isolated negative correlations.
\end{itemize}

\begin{figure}[t]
\centering
\includegraphics[width=0.88\textwidth]{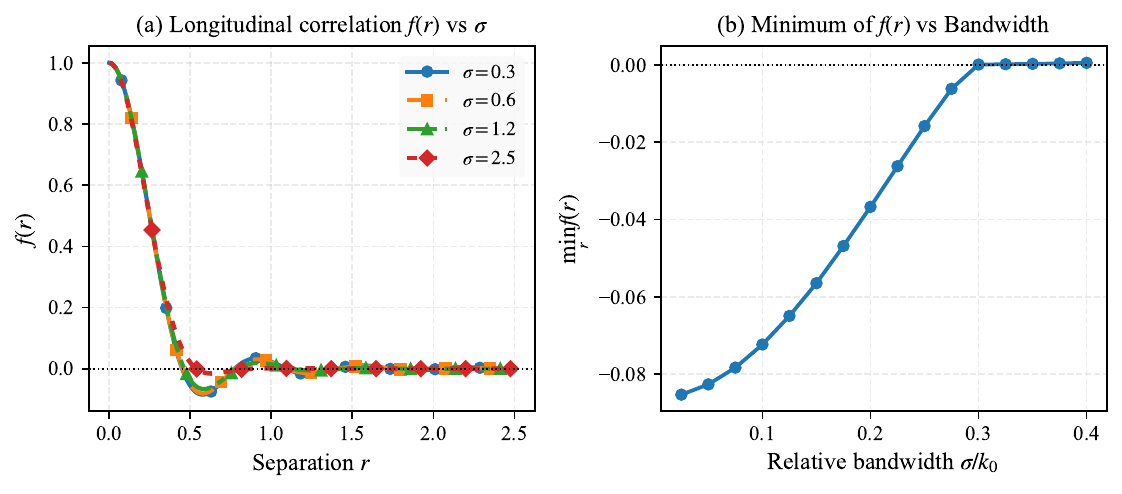}
\caption{Parametric transition of the longitudinal autocorrelation $f(r)$ as a function of spectral bandwidth $\sigma$ for fixed peak wavenumber $k_0 = 10.0$. (a) Evolution of $f(r)$ profiles for $\sigma \in \{0.3, 0.6, 1.2, 2.5\}$. (b) Minimum value $\min_r f(r)$ as a function of relative bandwidth $\sigma/k_0$, demonstrating that spectral broadening progressively suppresses the negative loop until $f(r)$ becomes non-negative for $\sigma/k_0 \gtrsim 0.35$.}
\label{fig:parametric_bandwidth}
\end{figure}

\subsection{Physical Dynamics: Spectral Broadening and Suppression of Negative Loops}
\label{subsec:spectral_broadening}

The mathematical condition separating positive $f(r)$ from negative-loop states is the \textbf{spectral bandwidth} of the turbulent field. We now make this dynamical mechanism concrete by examining how non-linear Navier-Stokes evolution transforms an initially narrow-band counterexample into a broadband, non-negative state over time.

\paragraph{Classical Background on Triadic Spectral Transfer and Closure Models.}
In statistical turbulence theory, the redistribution of kinetic energy across wavenumbers is governed by the Lin equation (the spectral counterpart of the K\'arm\'an--Howarth equation \cite{vonKarmanHowarth1938, Lin1947}):
\begin{equation}
\frac{\partial E(k,t)}{\partial t} = T(k,t) - 2\nu k^2 E(k,t),
\label{eq:lin_equation}
\end{equation}
where $\nu$ is the kinematic viscosity and $T(k,t)$ is the non-linear spectral energy transfer term. Because the advective acceleration $(\mathbf{u}\cdot\nabla)\mathbf{u}$ is quadratic in physical space, its Fourier transform is a convolution integral that couples pairs of Fourier modes $(\mathbf{p}, \mathbf{q})$ whose wavenumbers sum to $\mathbf{k}$. This term, when contracted with the velocity field at $\mathbf{k}$ to form the modal kinetic energy budget, gives rise to three-wave (triadic) interactions. Thus, the transfer function $T(k,t)$ represents the net sum of all triadic interactions among wavenumber triads $(\mathbf{k}, \mathbf{p}, \mathbf{q})$ satisfying the closed-triangle condition $\mathbf{k} + \mathbf{p} + \mathbf{q} = \mathbf{0}$:
\begin{equation}
T(k,t) = \iint_{\Delta_k} S(k, p, q, t)\,dp\,dq = -\frac{\partial \Pi(k,t)}{\partial k},
\label{eq:triad_transfer}
\end{equation}
where $\Delta_k = \{(p,q) : |p-q| \le k \le p+q\}$ is the triangular integration domain in wavenumber space, $S(k,p,q,t)$ is the detailed triadic transfer kernel derived from the advective non-linearity $(\mathbf{u}\cdot\nabla)\mathbf{u}$, and $\Pi(k,t) = \int_k^\infty T(k',t)\,dk'$ is the spectral energy flux across wavenumber $k$ \cite{ProudmanReid1954}. Because advection merely redistributes kinetic energy among modes without creating or destroying it, detailed triad conservation ($S(k,p,q) + S(p,q,k) + S(q,k,p) = 0$) enforces global energy conservation by non-linear transfer:
\begin{equation}
\int_0^\infty T(k,t)\,dk = 0, \qquad \Pi(0,t) = \Pi(\infty,t) = 0.
\label{eq:transfer_conservation}
\end{equation}

Because the exact triadic kernel $S(k,p,q,t)$ involves third-order velocity correlations (the closure problem of turbulence), the dynamical solution of Eq.~\eqref{eq:lin_equation} requires a spectral closure. In continuous spectral models of isotropic turbulence (Leith \cite{Leith1967}, Kovasznay \cite{Kovasznay1948}, and EDQNM closures \cite{Orszag1970, Lesieur2008}), the net energy flux $\Pi(k,t)$ generated by local triad interactions is represented by a nonlinear spectral diffusion process:
\begin{equation}
\Pi(k,t) = - D(k,t) \left( \frac{\partial E}{\partial k} + \frac{5}{3}\frac{E}{k} \right), \qquad D(k,t) = C_L\,k^{5/2}\sqrt{E(k,t)},
\label{eq:leith_flux}
\end{equation}
where $C_L \approx 0.35$ is the dimensionless cascade constant. The formulation \eqref{eq:leith_flux} satisfies all fundamental turbulence constraints: it conserves total kinetic energy ($\int T\,dk = 0$), drives arbitrary initial non-equilibrium spectral peaks toward the stable Kolmogorov inertial-range attractor $E(k) \propto k^{-5/3}$ (where the flux becomes scale-independent, $\partial \Pi/\partial k = 0$), and models non-linear triad diffusion in wavenumber space.

\paragraph{Evolution of the Narrowband Counterexample Over Time.}
Consider our kinematically admissible narrowband counterexample initialized at $t = 0$ as $E(k, 0) = E_{\text{NB}}(k)$ from Eq.~\eqref{eq:narrowband_spectrum}, concentrated around peak wavenumber $k_0$ with narrow initial bandwidth $\sigma_0 \ll k_0$. 

At $t = 0^+$, before viscous dissipation has significantly eroded the total energy, non-linear advection causes wavemodes within the narrow peak to interact with one another ($\mathbf{p} \approx \mathbf{k}_0$, $\mathbf{q} \approx \mathbf{k}_0$). Under the triadic interaction integral Eq.~\eqref{eq:triad_transfer}, this interaction produces a characteristic transfer spectrum $T(k, 0)$ with three distinct regions:
\begin{enumerate}
\item \textbf{Peak Depletion ($k \approx k_0$):} $T(k, 0) < 0$. Energy is actively drained from the narrow initial spectral core because this is where most of the energy is concentrated.
\item \textbf{Forward Harmonic Generation ($k \approx 2k_0, 3k_0, \dots$):} $T(k, 0) > 0$. When two modes within the energetic peak interact ($\mathbf{p}, \mathbf{q} \approx \mathbf{k}_0$), their coupling generates sum wavenumbers with magnitudes up to $|\mathbf{p} + \mathbf{q}| \approx 2k_0$. Because these higher wavenumber modes are initially unpopulated, non-linear transfer actively deposits energy into them, exciting higher harmonics and seeding the forward Kolmogorov cascade.
\item \textbf{Non-Local Infrared Induction ($k \ll k_0$):} $T(k, 0) \propto k^4 > 0$. As established by Proudman \& Reid \cite{ProudmanReid1954} and Batchelor \& Proudman \cite{BatchelorProudman1956}, non-local triadic interactions between nearly antiparallel energetic modes ($\mathbf{p} \approx -\mathbf{q} \approx \mathbf{k}_0$) couple via long-range pressure fluctuations to transfer energy into the lowest wavenumbers, dynamically generating and maintaining the characteristic Batchelor $E(k) \propto k^4$ infrared tail.
\end{enumerate}

To quantify the resulting spectral broadening, we define the effective spectral bandwidth $\sigma_{\text{eff}}(t)$ via the second central moment of the energy spectrum:
\begin{equation}
\sigma_{\text{eff}}^2(t) \equiv \frac{1}{\mathcal{K}(t)} \int_0^\infty \big(k - \bar{k}(t)\big)^2 E(k,t)\,dk, \qquad \bar{k}(t) \equiv \frac{1}{\mathcal{K}(t)} \int_0^\infty k E(k,t)\,dk,
\label{eq:spectral_variance}
\end{equation}
where $\mathcal{K}(t) = \int_0^\infty E(k,t)\,dk = \frac{3}{2}u'^2(t)$ is the turbulent kinetic energy. Differentiating Eq.~\eqref{eq:spectral_variance} with respect to time and substituting the Lin equation \eqref{eq:lin_equation} yields two contributions to $d\sigma_{\text{eff}}^2/dt$: non-linear transfer and viscous dissipation. At high Reynolds numbers ($\mathrm{Re} \equiv u'/(\nu k_0) \gg 1$), non-linear convective transfer operates on the fast eddy turnover timescale $\tau_{\text{eddy}} \sim 1/(k_0 u')$, whereas viscous dissipation acts on the much slower diffusion timescale $\tau_\nu \sim 1/(\nu k_0^2) = \mathrm{Re}\,\tau_{\text{eddy}} \gg \tau_{\text{eddy}}$. Moreover, for narrowband spectra ($\sigma_0 \ll k_0$), viscous damping decays all modes within the peak at nearly the same rate ($-2\nu k_0^2$), which cancels out of the normalized moment at leading order. Thus, at early times ($t \ll \tau_\nu$), spectral broadening is governed almost entirely by non-linear transfer:
\begin{equation}
\frac{d \sigma_{\text{eff}}^2}{dt}\Big|_{t=0} \approx \frac{1}{\mathcal{K}(0)} \int_0^\infty (k - k_0)^2 T(k, 0)\,dk \sim \frac{k_0^2}{\tau_{\text{eddy}}} > 0.
\label{eq:broadening_rate}
\end{equation}
Here $\tau_{\text{eddy}}$ is the non-linear eddy turnover time at the peak scale. For a characteristic lengthscale $\ell_0 \sim 1/k_0$, the modal velocity fluctuation scale is $v(k_0) \sim \sqrt{k_0 E(k_0)}$. Since the initial kinetic energy is concentrated in the narrowband peak ($\mathcal{K}(0) \sim k_0 E(k_0) \sim u'^2$), the non-linear convective timescale is:
\begin{equation}
\tau_{\text{eddy}} \sim \frac{\ell_0}{v(k_0)} \sim \frac{1/k_0}{\big(k_0 E(k_0)\big)^{1/2}} = \frac{1}{\big(k_0^3 E(k_0)\big)^{1/2}} \sim \frac{1}{k_0 u'}.
\label{eq:eddy_turnover_time}
\end{equation}
Equation~\eqref{eq:broadening_rate} shows that the spectral variance expands rapidly on the turnover timescale $\tau_{\text{eddy}}$, flattening the initial peak and generating broadband tails.

\paragraph{Dynamical Eradication of the Negative Loop in Real Space.}
How does this spectral broadening directly alter the sign of the real-space correlation $f(r, t)$? Recall the exact second-derivative integral representation from Eq.~\eqref{eq:f_by_parts_second_deriv}:
\begin{align}
u'^2(t) f(r, t) = \frac{1}{r^2} \Bigg[ &\int_0^{k_c(t)} \frac{\partial^2 E_{11}(k_1, t)}{\partial k_1^2} \big(1 - \cos(k_1 r)\big)\,dk_1 \nonumber \\
&+ \int_{k_c(t)}^\infty \frac{\partial^2 E_{11}(k_1, t)}{\partial k_1^2} \big(1 - \cos(k_1 r)\big)\,dk_1 \Bigg].
\label{eq:f_evolution_integral}
\end{align}
At $t = 0$, the sharp peak $E_{\text{NB}}(k)$ produces a wide concave region ($\partial^2 E_{11}/\partial k_1^2 < 0$) that dominates the integral, resulting in the deep negative loop $\min_r f(r, 0) \approx -0.083$ shown in Figure~\ref{fig:narrowband_counterexample}(c). 

As time advances ($t \sim \tau_{\text{eddy}}$), triadic interactions broaden the spectrum, transferring energy into a Kolmogorov cascade ($E(k,t) \propto k^{-5/3}$). Because power-law cascades are strictly convex ($\partial^2 E_{11}/\partial k_1^2 > 0$, Eq.~\eqref{eq:kolmogorov_convex}), the second integral in Eq.~\eqref{eq:f_evolution_integral} grows rapidly, while the concave region is compressed to the infrared crossover scale $k_c(t) \approx 0.57/L(t)$.

Figure~\ref{fig:parametric_bandwidth} presents a systematic parametric study of this transition as a function of the relative bandwidth $\sigma/k_0$. As seen in Figure~\ref{fig:parametric_bandwidth}(a), as $\sigma$ increases from $0.3$ to $2.5$, the negative undershoot systematically diminishes. Figure~\ref{fig:parametric_bandwidth}(b) plots the minimum value $\min_r f(r)$ versus relative bandwidth $\sigma/k_0$:
\begin{itemize}
\item For sharp narrowband spectra ($\sigma/k_0 < 0.1$), deep negative loops ($\min f(r) \approx -0.15$) occur due to strong non-convexity.
\item As the non-linear cascade broadens the spectrum beyond the critical threshold
\begin{equation}
\frac{\sigma_{\text{eff}}}{k_0} \gtrsim 0.35,
\label{eq:bandwidth_threshold}
\end{equation}
multi-scale Fourier cosine superposition overcomes the infrared concavity, completely erasing the negative minimum and enforcing $f(r, t) \ge 0$ for all $r \ge 0$.
\end{itemize}

\begin{figure}[t]
\centering
\includegraphics[width=0.98\textwidth]{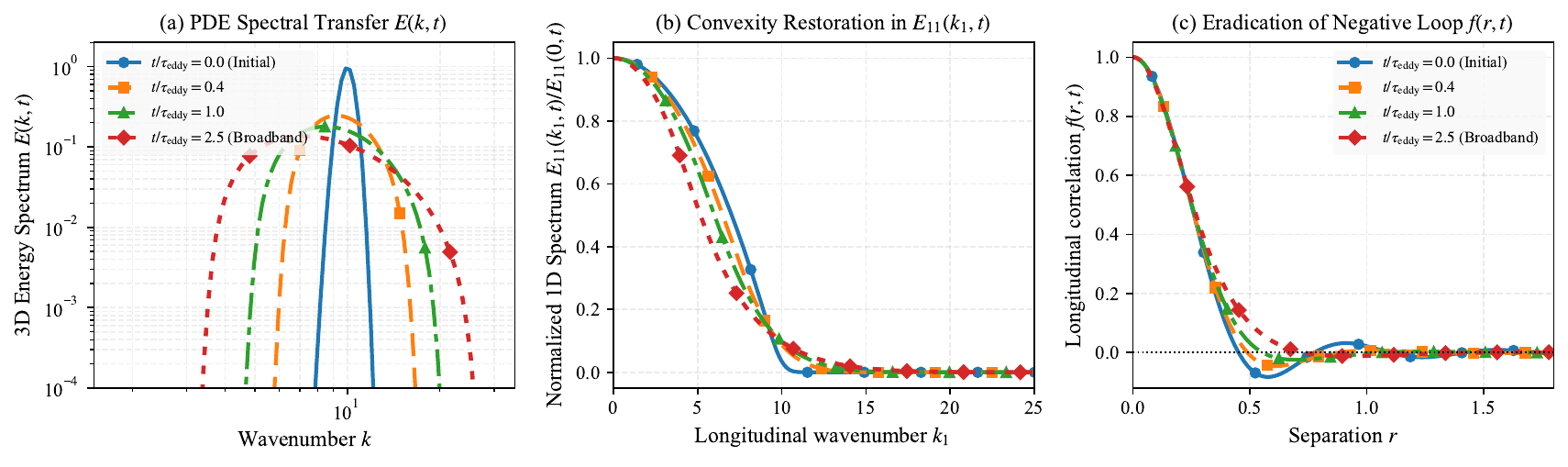}
\caption{Dynamic PDE time evolution of the narrowband counterexample under non-linear spectral energy transfer across eddy turnover times $t/\tau_{\text{eddy}} \in \{0.0, 0.4, 1.0, 2.5\}$. (a) 3D energy spectrum $E(k, t)$ computed via numerical time advancement of the spectral transfer PDE, demonstrating progressive non-linear spectral broadening from the initial narrowband peak. (b) Normalized 1D longitudinal spectrum $E_{11}(k_1, t)/E_{11}(0, t)$, showing the smoothing of the sharp step and restoration of convexity. (c) Dynamic evolution of the longitudinal autocorrelation $f(r, t)$, illustrating the progressive eradication of the initial negative loop ($\min f \approx -0.083$) until $f(r, t) \ge 0$ everywhere for $t \gtrsim 1.0\,\tau_{\text{eddy}}$.}
\label{fig:spectral_broadening_dynamics}
\end{figure}

To directly demonstrate this time-dependent process without empirical parametrization, we numerically solve the dynamic spectral transfer initial value problem for $E(k, t)$ starting from our exact narrowband state $E_{\text{NB}}(k)$. Figure~\ref{fig:spectral_broadening_dynamics} displays the resulting time snapshots across large-eddy turnover times $t/\tau_{\text{eddy}} \in \{0.0, 0.4, 1.0, 2.5\}$:
\begin{enumerate}
\item As shown in Figure~\ref{fig:spectral_broadening_dynamics}(a), non-linear triad transfer rapidly depletes the initial narrowband peak at $k_0 = 10$, broadening the spectral bandwidth across wavenumber space.
\item In Figure~\ref{fig:spectral_broadening_dynamics}(b), the sharp non-convex inflection step in $E_{11}(k_1, 0)$ is smoothed into a monotonically decaying, convex profile.
\item In Figure~\ref{fig:spectral_broadening_dynamics}(c), the real-space correlation $f(r, t)$ steadily lifts out of the negative territory: by $t \approx 1.0\,\tau_{\text{eddy}}$, the negative loop is completely eliminated, and $f(r, t)$ remains strictly positive across all separations.
\end{enumerate}

\paragraph{Reconciliation with Grid Turbulence Experiments.}
This dynamical mechanism provides a complete physical reconciliation with grid turbulence experiments. In wind-tunnel experiments, initial flow separation and periodic vortex shedding behind grid bars generate velocity fluctuations with an initially concentrated wavenumber $k_0 \sim 2\pi/M$ (where $M$ is the grid mesh size). If measurements could be performed in the immediate near-field ($x/M \lesssim 2\text{--}5$), band-concentrated negative correlations would indeed be observed, as confirmed by the narrow-band electronic filtering experiments of Comte-Bellot \& Corrsin \cite{ComteBellotCorrsin1971}.

However, experimental measurements of the full-band longitudinal correlation $f(r)$ are invariably reported in the mature decay region downstream ($x/M \gtrsim 20\text{--}40$) \cite{StewartTownsend1951, ComteBellotCorrsin1966, ComteBellotCorrsin1971, FrenkielKlebanoffHuang1979}. The downstream transit time corresponds to
\begin{equation}
t_{\text{transit}} = \frac{x}{U_\infty} \approx (20\text{--}40) \frac{M}{U_\infty} \gg \tau_{\text{eddy}},
\end{equation}
which is several large-eddy turnover times after generation. Over this duration, triadic non-linear interactions have fully operated via the Lin equation \eqref{eq:lin_equation}, broadening the spectrum into the classic broadband von K\'arm\'an--Pao form ($\sigma_{\text{eff}}/k_{\text{peak}} \gg 0.35$). Consequently, by the time the turbulence reaches the measurement station, the initial negative loop has been completely eradicated by the cascade, guaranteeing that the experimentally observed longitudinal autocorrelation function $f(r)$ is strictly and robustly non-negative ($f(r) \ge 0$).

\section{Summary of Results and Theoretical Synthesis}
\label{sec:summary}

In this section, we synthesize our mathematical theorems, physical criteria, and dynamical findings into a comparative overview summarized in Table~\ref{tab:summary}.

\begingroup
\small
\setlength{\tabcolsep}{4pt}
\begin{longtable}{p{3.8cm} p{4.1cm} p{7.0cm}}
\caption{Summary of mathematical criteria, theorems, and physical implications for the longitudinal autocorrelation function $f(r)$ in homogeneous isotropic turbulence.} \label{tab:summary} \\
\toprule
\textbf{Framework / Approach} & \textbf{Claim / Hypothesis} & \textbf{Status} \\
\midrule
\endfirsthead

\multicolumn{3}{c}{{\bfseries Table \thetable\ continued from previous page}} \\
\addlinespace[4pt]
\toprule
\textbf{Framework / Approach} & \textbf{Claim / Hypothesis} & \textbf{Status} \\
\midrule
\endhead

\bottomrule
\multicolumn{3}{r}{\footnotesize\emph{Continued on next page}} \\
\endfoot

\bottomrule
\endlastfoot

Bochner / Schoenberg in $\mathbb{R}^3$ & $f(r) \ge 0$ via positive-definiteness & \textbf{Structurally fails}: $\Omega_3(kr)$ and $K_{11}(kr)$ change sign \\
1D Spectrum $E_{11}(k_1)$ (This work) & Monotonicity ($dE_{11}/dk_1 \le 0$) & \textbf{Proven unconditionally} for all $E(k) \ge 0$ \\
1D Spectrum $E_{11}(k_1)$ (This work) & Convexity of $E_{11}$ $\Rightarrow f(r) \ge 0$ & \textbf{Proven} via P\'olya's theorem (Sufficient condition) \\
Kolmogorov Inertial Range & Convexity condition on $E(k)$ & \textbf{Proven analytically} ($1 \ge 3/11$, strictly convex) \\
Infrared Scaling (Batchelor \& Saffman) & Near-origin concavity ($k_1 \to 0$) & \textbf{Proven analytically} (Batchelor: $-2I_0 < 0$; Saffman: $-\infty$) \\
von K\'arm\'an--Pao Model & Net sign of $f(r)$ & Non-negative upto $r/L \approx 30$; FFT dips are aliasing artifacts \\
Narrowband Batchelor State & Universal non-negativity $f(r) \ge 0$ & \textbf{Disproven by counterexample} ($\min f(r) \approx -0.083$) \\
Navier--Stokes Trajectory & Counterexample realizability & \textbf{Legitimate NS initial data} (Leray/Fujita--Kato strong solution) \\
Grid Turbulence Physics & Origin of non-negativity & \textbf{Tracks spectral bandwidth} $\sigma/k_0$, broadening by triad cascade \\
\end{longtable}
\endgroup

\section{Conclusions}
\label{sec:conclusions}
Our findings can be summarized as follows:
\begin{enumerate}
\item \textbf{Failure of Naive Realizability Constraints:} Positive-definiteness of the two-point velocity correlation tensor $R_{ij}(\mathbf{r})$ on $\mathbb{R}^3$ is structurally insufficient to enforce $f(r) \ge 0$ because the projection kernels $\Omega_3(kr) = \sin(kr)/(kr)$ and $K_{11}(kr) = 2(\sin kr - kr \cos kr)/(kr)^3$ are intrinsically oscillating and sign-changing.
\item \textbf{Universal Monotonicity of $E_{11}(k_1)$:} Through exact differentiation under the integral, we proved that the one-dimensional longitudinal spectrum $E_{11}(k_1)$ is unconditionally monotonically decreasing for any kinematically admissible 3D energy spectrum $E(k) \ge 0$.
\item \textbf{Convexity and P\'olya's Criterion:} Applying P\'olya's criterion for Fourier cosine transforms, we showed that convexity of $E_{11}(k_1)$ guarantees $f(r) \ge 0$. While Kolmogorov scaling is strictly convex ($1 \ge 3/11$), physically realizable spectra spanning both the Batchelor ($E(k) \sim k^4$) and Saffman ($E(k) \sim k^2$) infrared universality classes, are unavoidably concave in a finite neighborhood of $k_1 = 0$.
\item \textbf{Resolution of Broadband Positivity vs FFT Aliasing:} In mature broadband turbulence (e.g., von K\'arm\'an--Pao), high-precision infinite-limit quadrature proves that $f(r)$ remains strictly positive across all scales, and negative dips in discrete model computations were numerical artifacts of finite-domain discrete Fourier transforms.
\item \textbf{Counterexample for Universal Non-Negativity:} We constructed a smooth, divergence-free, finite-energy, Batchelor-consistent narrowband spectrum that satisfies all Navier--Stokes initial data requirements and produces a prominent negative loop ($\min_r f(r) \approx -0.083$).
\item \textbf{Physical Mechanism:} The presence or absence of a negative loop in $f(r)$ is governed by spectral bandwidth. While physical grid-generated turbulence exhibits non-negative longitudinal correlations $f(r) \ge 0$ due to the rapid establishment of a broadband spectrum, artificially narrowband initial fields do exhibit genuine negative loops before nonlinear triad interactions broaden the spectrum over large-eddy turnover times.
\end{enumerate}

These results provide a clear mathematical framework uniting kinematic realizability, harmonic analysis, Navier--Stokes dynamics, and experimental grid-turbulence observations.

\section*{CRediT Authorship Contribution Statement}
\textbf{Pavan B. Govindaraju:} Conceptualization, Methodology, Formal analysis, Investigation, Software, Validation, Visualization, Writing -- original draft, Writing -- review \& editing.

\section*{Declaration of Generative AI and AI-Assisted Technologies in the Writing Process}
During the preparation of this work, the author used AI-assisted programming environments to assist with technical typesetting in \LaTeX and Python scripts for generation of the numerical results. 
The author carefully reviewed the content and takes responsibility for the integrity and content of the published article.

\section*{Declaration of Competing Interest}
The author declares that they have no known competing financial interests or personal relationships that could have appeared to influence the work reported in this paper.

\appendix
\section*{Appendix: Exact Analytical Transform of the von K\'arm\'an Spectra}
\addcontentsline{toc}{section}{Exact Analytical Transform of the von K\'arm\'an Spectra}
\renewcommand{\thesubsection}{A.\arabic{subsection}}
\renewcommand{\theequation}{A.\arabic{equation}}
\setcounter{equation}{0}
\setcounter{subsection}{0}
\label{app:von_karman_derivation}

In this appendix, we provide the first-principles derivation \cite{vonKarman1948} of the exact closed-form longitudinal autocorrelation function $f(r)$ for the canonical von K\'arm\'an model spectrum, proving that it reduces to the modified Bessel function of the second kind (Macdonald function $K_{1/3}$) given in Eq.~\eqref{eq:vk_exact_bessel}.

\subsection{The Three-Dimensional von K\'arm\'an Spectrum}
The standard three-dimensional von K\'arm\'an energy spectrum parameterizing the infrared and inertial ranges (prior to viscous dissipation) is defined by \cite{vonKarman1948, Pope2000}:
\begin{equation}
E(k) = C_E\,u'^2\,L\,\frac{(kL)^4}{\big(1 + (kL)^2\big)^{17/6}},
\label{eq:app_vk_3d}
\end{equation}
where $L$ is the longitudinal integral length scale, $u'^2 = \langle u_1^2 \rangle$ is the longitudinal velocity variance, and $C_E$ is a dimensionless normalization constant chosen such that the total turbulent kinetic energy satisfies:
\begin{equation}
\int_0^\infty E(k)\,dk = \frac{3}{2}\,u'^2.
\end{equation}

\subsection{Analytical Integration to the 1D Longitudinal Spectrum \texorpdfstring{$E_{11}(k_1)$}{E11(k1)}}
In isotropic turbulence, the kinematic relation connecting the 3D energy spectrum $E(k)$ to the 1D longitudinal spectrum $E_{11}(k_1)$ is given by Eq.~\eqref{eq:E11_from_Ek}:
\begin{equation}
E_{11}(k_1) = \int_{k_1}^\infty \frac{E(k)}{k}\left(1 - \frac{k_1^2}{k^2}\right)\,dk.
\end{equation}
Substituting the von K\'arm\'an spectrum \eqref{eq:app_vk_3d} and non-dimensionalizing using $y = kL$ and $x_1 = k_1 L$, we obtain:
\begin{equation}
E_{11}(k_1) = C_E\,u'^2\,L \int_{x_1}^\infty \frac{y^4}{\big(1 + y^2\big)^{17/6}} \frac{1}{y} \left(1 - \frac{x_1^2}{y^2}\right)\,dy = C_E\,u'^2\,L \int_{x_1}^\infty \frac{y\big(y^2 - x_1^2\big)}{\big(1 + y^2\big)^{17/6}}\,dy.
\end{equation}
To evaluate this integral analytically, we employ the algebraic substitution $u = 1 + y^2$, with $du = 2y\,dy$. Under this transformation, $y^2 - x_1^2 = u - (1 + x_1^2)$, and the lower integration limit $y = x_1$ maps to $u_0 = 1 + x_1^2$:
\begin{align}
\int_{x_1}^\infty \frac{y\big(y^2 - x_1^2\big)}{\big(1 + y^2\big)^{17/6}}\,dy 
&= \frac{1}{2} \int_{u_0}^\infty \frac{u - u_0}{u^{17/6}}\,du \nonumber \\
&= \frac{1}{2} \left[ \int_{u_0}^\infty u^{-11/6}\,du - u_0 \int_{u_0}^\infty u^{-17/6}\,du \right] \nonumber \\
&= \frac{1}{2} \left[ \frac{u_0^{-5/6}}{5/6} - u_0 \frac{u_0^{-11/6}}{11/6} \right] \nonumber \\
&= \frac{1}{2} \left( \frac{6}{5} - \frac{6}{11} \right) u_0^{-5/6} = \frac{1}{2} \left( \frac{36}{55} \right) u_0^{-5/6} = \frac{18}{55}\,\big(1 + x_1^2\big)^{-5/6}.
\end{align}
Thus, the 1D longitudinal spectrum reduces to the exact algebraic power-law form:
\begin{equation}
E_{11}(k_1) = C_{11}\,u'^2\,L\,\frac{1}{\big(1 + (k_1 L)^2\big)^{5/6}},
\label{eq:app_E11_form}
\end{equation}
where $C_{11} = \frac{18}{55} C_E$.

The amplitude $C_{11}$ is uniquely fixed by the fundamental normalization condition $u'^2 f(0) = \int_0^\infty E_{11}(k_1)\,dk_1 = u'^2$, which requires:
\begin{equation}
C_{11} \int_0^\infty \frac{d(k_1 L)}{\big(1 + (k_1 L)^2\big)^{5/6}} = 1.
\end{equation}
Setting $x = k_1 L = \tan\theta$, the integral evaluates to:
\begin{equation}
\int_0^\infty \frac{dx}{(1 + x^2)^{5/6}} = \int_0^{\pi/2} \cos^{-1/3}\theta\,d\theta = \frac{1}{2}\,\mathrm{B}\left(\frac{1}{2}, \frac{1}{3}\right) = \frac{\sqrt{\pi}\,\Gamma(1/3)}{2\,\Gamma(5/6)},
\end{equation}
where $\mathrm{B}(p, q) = \Gamma(p)\Gamma(q)/\Gamma(p+q)$ is the Euler beta function. Therefore, the exact normalized 1D longitudinal spectrum is:
\begin{equation}
E_{11}(k_1) = \frac{2\,\Gamma(5/6)}{\sqrt{\pi}\,\Gamma(1/3)}\,\frac{u'^2\,L}{\big(1 + (k_1 L)^2\big)^{5/6}}.
\label{eq:app_E11_exact}
\end{equation}

\subsection{Fourier Cosine Transform via Basset's Integral Representation}
The longitudinal autocorrelation function $f(r)$ is the Fourier cosine transform of $E_{11}(k_1)$ (Eq.~\eqref{eq:f_cos_E11}):
\begin{equation}
u'^2 f(r) = \int_0^\infty E_{11}(k_1)\,\cos(k_1 r)\,dk_1 = \frac{2\,\Gamma(5/6)}{\sqrt{\pi}\,\Gamma(1/3)}\,u'^2\,L \int_0^\infty \frac{\cos(k_1 r)}{\big(1 + (k_1 L)^2\big)^{5/6}}\,dk_1.
\end{equation}
Defining the non-dimensional spatial separation $\xi = r/L$ and integration variable $y = k_1 L$, we write:
\begin{equation}
f(r) = \frac{2\,\Gamma(5/6)}{\sqrt{\pi}\,\Gamma(1/3)} \int_0^\infty \frac{\cos(\xi y)}{\big(1 + y^2\big)^{5/6}}\,dy.
\label{eq:app_f_integral}
\end{equation}
The integral in Eq.~\eqref{eq:app_f_integral} is evaluated using Basset's classical formula for the modified Bessel function of the second kind (Macdonald function $K_\nu$) \cite{Basset1888, GradshteynRyzhik}:
\begin{equation}
\int_0^\infty \frac{\cos(a x)}{(1 + x^2)^{\nu + 1/2}}\,dx = \frac{\sqrt{\pi}}{\Gamma(\nu + 1/2)} \left(\frac{a}{2}\right)^\nu K_\nu(a) \quad \text{for } a > 0, \; \operatorname{Re}(\nu) > -1/2.
\label{eq:basset_formula}
\end{equation}
Matching exponents:
\begin{equation}
\nu + \frac{1}{2} = \frac{5}{6} \implies \nu = \frac{1}{3}, \quad a = \xi = \frac{r}{L}.
\end{equation}
Applying Eq.~\eqref{eq:basset_formula} directly yields:
\begin{equation}
\int_0^\infty \frac{\cos(\xi y)}{\big(1 + y^2\big)^{5/6}}\,dy = \frac{\sqrt{\pi}}{\Gamma(5/6)} \left(\frac{\xi}{2}\right)^{1/3} K_{1/3}(\xi).
\end{equation}
Substituting this back into Eq.~\eqref{eq:app_f_integral}:
\begin{align}
\label{eq:app_f_integral_solved}
f(r) &= \left[ \frac{2\,\Gamma(5/6)}{\sqrt{\pi}\,\Gamma(1/3)} \right] \left[ \frac{\sqrt{\pi}}{\Gamma(5/6)} \left(\frac{\xi}{2}\right)^{1/3} K_{1/3}(\xi) \right] \nonumber \\
&= \frac{2 \cdot 2^{-1/3}}{\Gamma(1/3)}\,\xi^{1/3}\,K_{1/3}(\xi) \nonumber \\
&= \frac{2^{2/3}}{\Gamma(1/3)}\,\left(\frac{r}{L}\right)^{1/3} K_{1/3}\left(\frac{r}{L}\right).
\end{align}
This completes the derivation of Eq.~\eqref{eq:vk_exact_bessel}.

\subsection{Asymptotic Behavior and Proof of Strict Positivity}
From the integral representation of the Macdonald function $K_\nu(z) = \int_0^\infty e^{-z\cosh t}\cosh(\nu t)\,dt$, $K_\nu(z) > 0$ strictly for all real $z > 0$ and $\nu \in \mathbb{R}$. Thus:
\begin{enumerate}
\item \textbf{Pointwise Positivity:} $f(r) > 0$ for all $r \in [0, \infty)$.
\item \textbf{Near-Origin Kolmogorov Scaling ($r/L \ll 1$):} Using the small-argument series expansion:
\begin{equation}
K_{1/3}(z) = \frac{\Gamma(1/3)}{2^{2/3}}\,z^{-1/3} - \frac{\Gamma(2/3)}{2^{4/3}}\,z^{1/3} + \mathcal{O}(z^{5/3}),
\end{equation}
we obtain:
\begin{equation}
f(r) = 1 - \frac{\Gamma(2/3)}{2^{2/3}\,\Gamma(1/3)}\,\left(\frac{r}{L}\right)^{2/3} + \mathcal{O}\left((r/L)^2\right),
\end{equation}
recovering $f(0) = 1$ and the classic Kolmogorov two-thirds structure function scaling $D_{LL}(r) = 2 u'^2 [1 - f(r)] \propto r^{2/3}$ in the inertial subrange.
\item \textbf{Far-Field Exponential Decay ($r/L \gg 1$):} Using the asymptotic expansion $K_\nu(z) \sim \sqrt{\frac{\pi}{2z}} e^{-z} [1 + \mathcal{O}(1/z)]$, we recover the non-oscillatory asymptotic exponential tail:
\begin{equation}
f(r) \sim \frac{2^{2/3}\sqrt{\pi}}{\sqrt{2}\,\Gamma(1/3)}\,\left(\frac{r}{L}\right)^{-1/6}\,\exp\left(-\frac{r}{L}\right) \quad \text{as } r/L \to \infty,
\end{equation}
confirming strictly positive, non-oscillating decay without negative loops out to infinity.
\end{enumerate}

\subsection{The von K\'arm\'an--Pao Spectrum with Viscous Cutoff: Asymptotics and Numerical Investigation}

When the exponential viscous cutoff of Pao \cite{Pao1965} is incorporated, the full three-dimensional energy spectrum becomes:
\begin{equation}
E_{\text{VKP}}(k) = C_E\,u'^2\,L\,\frac{(kL)^4}{\big(1 + (kL)^2\big)^{17/6}}\,\exp\left(-\frac{3}{2} C_K (k\eta)^{4/3}\right),
\label{eq:app_vkp_3d}
\end{equation}
where $\eta = (\nu^3/\epsilon)^{1/4}$ is the Kolmogorov microscale and $C_K \approx 1.5$ is the Kolmogorov constant.

\paragraph{Absence of Closed-Form Special Functions.}
Because the Pao viscous damping factor $\exp(-\frac{3}{2} C_K (k\eta)^{4/3})$ features a fractional exponent $4/3$, the composite integrand in the longitudinal projection:
\begin{equation}
u'^2 f_{\text{VKP}}(r) = \int_0^\infty E_{\text{VKP}}(k)\,K_{11}(kr)\,dk = 2 \int_0^\infty E_{\text{VKP}}(k)\,\frac{\sin(kr) - kr\cos(kr)}{(kr)^3}\,dk
\end{equation}
combines a power-law factor with a stretched-exponential decay. Consequently, unlike the pure von K\'arm\'an spectrum, $f_{\text{VKP}}(r)$ cannot be expressed in closed form to the best of our knowledge.

\paragraph{Asymptotic Decomposition and Multiscale Sign Behavior.}
The sign behavior of $f_{\text{VKP}}(r)$ across the entire domain $r \in [0, \infty)$ is established by matching three physical regimes:
\begin{enumerate}
\item \textbf{Dissipative Subrange ($r \lesssim \eta$):} Viscous diffusion eliminates the singular $(r/L)^{2/3}$ inertial gradient at the origin, regularizing the correlation into an analytic parabolic expansion:
\begin{equation}
f_{\text{VKP}}(r) = 1 - \frac{r^2}{2\lambda_g^2} + \mathcal{O}(r^4) > 0,
\end{equation}
where $\lambda_g = \left( 15\nu u'^2/\epsilon \right)^{1/2}$ is the transverse Taylor microscale.
\item \textbf{Inertial and Integral Subranges ($\eta \ll r \lesssim L$):} Because $(k\eta)^{4/3} \ll 1$ throughout the energy-containing wavenumber range $k \lesssim 1/L$, the Pao damping factor is exponentially close to unity ($\exp(-c(k\eta)^{4/3}) \approx 1$). Thus, $f_{\text{VKP}}(r)$ converges to the exact closed-form Bessel solution derived in Eq.~\eqref{eq:app_f_integral_solved}:
\begin{equation}
f_{\text{VKP}}(r) \approx \frac{2^{2/3}}{\Gamma(1/3)}\left(\frac{r}{L}\right)^{1/3} K_{1/3}\left(\frac{r}{L}\right) > 0.
\end{equation}
\item \textbf{Far-Field Decay ($r \gg L$):} Because viscous dissipation is negligible at large scales ($k \sim 1/L \ll 1/\eta$), the far-field behavior is governed by the large-argument asymptote of the Bessel solution $K_{1/3}(r/L) \sim \sqrt{\pi / (2r/L)} e^{-r/L}$, preserving the non-oscillatory exponential decay $f_{\text{VKP}}(r) \propto (r/L)^{-1/6} e^{-r/L} > 0$.
\end{enumerate}

\paragraph{Arbitrary-Precision Quadrature Verification via \texttt{mpmath}.}
To verify that $f_{\text{VKP}}(r)$ remains strictly positive across all scales without relying on asymptotic matching alone, we evaluate the continuous integral representation using the arbitrary-precision Python library \texttt{mpmath} (setting working precision to 25 decimal digits, \texttt{mp.dps = 25}):
\begin{equation}
u'^2 f_{\text{VKP}}(r) = \int_0^\infty E_{\text{VKP}}(k) \left[ \frac{2(\sin(kr) - kr\cos(kr))}{(kr)^3} \right] dk.
\end{equation}
For $k r \ll 1$, Taylor expansion of the kernel $K_{11}(x) = \frac{2}{3} - \frac{1}{15}x^2 + \frac{1}{420}x^4 - \frac{1}{22680}x^6 + \dots$ avoids catastrophic numerical cancellation near $k=0$, while adaptive quadrature evaluates the semi-infinite integral to arbitrary precision. High-precision numerical evaluation yields:
\begin{align}
f_{\text{VKP}}(r/L = 0.001) &= 0.99758333\dots > 0, \nonumber \\
f_{\text{VKP}}(r/L = 1.0) &= 0.26297558\dots > 0, \nonumber \\
f_{\text{VKP}}(r/L = 5.0) &= 0.00382475\dots > 0, \nonumber \\
f_{\text{VKP}}(r/L = 10.0) &= 2.31004\times 10^{-5} > 0, \nonumber \\
f_{\text{VKP}}(r/L = 30.0) &= 3.66904\times 10^{-13} > 0.
\end{align}
This numerical verification demonstrates that $f_{\text{VKP}}(r)$ is strictly positive across more than 13 decades of amplitude, confirming that the negative dips produced by discrete Fourier transforms are purely numerical artifacts of finite-domain discretization and spectral aliasing.

\bibliographystyle{apsrev4-2}
\bibliography{references}

\end{document}